\newtheorem{theorem}{Theorem}
\newtheorem{lemma}[theorem]{Lemma}
\theoremstyle{definition}
\theoremstyle{remark}
\newtheorem{remark}[theorem]{Remark}
\theoremstyle{remark}
\newtheorem{example}[theorem]{Example}
\numberwithin{theorem}{section}
\providecommand{\N}{}
\renewcommand{\N}{{\mathbb N}}
\newcommand{\E}[1]{{\mathbb E}\left[#1\right]}
\newcommand{\p}[1]{\mathbb{P}\left(#1\right)}
\newcommand{\Cprob}[2]{\mathbb{P}\left(\left. #1 \; \right| \; #2\right)}
\newcommand\cC{\mathcal C}
\newcommand\cU{{\mathcal U}}
\newcommand{\ER}{Erd\H{o}s-R\'e{}nyi }
\newcommand{\pairs}[1]{\ensuremath{\mathrm{sp}\left(#1\right)}}
\newcommand{\fixedsra}{$\vrule height 2\fontdimen22\textfont2 width 0pt\shortrightarrow$}
\newcommand{\shortarrow}[1]{%
  \mathrel{\text{\rotatebox[origin=c]{\numexpr#1*45}{\fixedsra}}}
}
\newcommand{\uppairs}[1]{\ensuremath{\mathrm{sp}^{\!\shortarrow{1}}\!\left(#1\right)}}
\newcommand{\downpairs}[1]{\ensuremath{\mathrm{sp}^{\!\shortarrow{7}}\!\left(#1\right)}}
\newcommand{\upmeasure}[1]{\ensuremath{\sigma^{\shortarrow{1}}_{\mathrm{SR}}\left(#1\right)}}
\newcommand{\downmeasure}[1]{\ensuremath{\sigma^{\shortarrow{7}}_{\mathrm{SR}}\left(#1\right)}}
\newcommand{\measure}[1]{\ensuremath{\sigma_{\mathrm{SR}}\left(#1\right)}}
\newcommand{\measurematrix}[1]{\ensuremath{M_{\mathrm{SR}}\left(#1\right)}}
\newcommand{\timematrix}[1]{\ensuremath{M^{\rightarrow}_{\mathrm{SR}}\left(#1\right)}}
\newcommand{\measurename}{simplicial ratio}
\newcommand{\matrixname}{simplicial matrix}
\newcommand{\upname}{bottom-up simplicial ratio}
\newcommand{\downname}{top-down simplicial ratio}
\newcommand{\timematrixname}{temporal simplicial matrix}
\title{Counting simplicial pairs in hypergraphs}
\author{
Jordan Barrett\thanks{Department of Mathematics, Toronto Metropolitan University, Toronto, Canada; e-mail: \texttt{jordan.barrett@torontomu.ca}}
\And
Pawe\l{} Pra\l{}at\thanks{Department of Mathematics, Toronto Metropolitan University, Toronto, Canada; e-mail: \texttt{pralat@torontomu.ca}}
\And
Aaron Smith\thanks{Department of Mathematics and Statistics, University of Ottawa, Ottawa, Canada; e-mail: \texttt{asmi28@uOttawa.ca}}
\And
Fran\c{c}ois Th\'{e}berge\thanks{Tutte Institute for Mathematics and Computing, Ottawa, Canada; email: \texttt{theberge@ieee.org}}
}
\begin{document}

\maketitle

\begin{abstract}
We present two ways to measure the simplicial nature of a hypergraph: the \measurename{} and the \matrixname{}. We show that the simplicial ratio captures the frequency, as well as the rarity, of simplicial interactions in a hypergraph while the simplicial matrix provides more fine-grained details. We then compute the \measurename{}, as well as the \matrixname{}, for 10 real-world hypergraphs and, from the data collected, hypothesize that simplicial interactions are more and more \emph{deliberate} as edge size increases. We then present a new Chung-Lu model that includes a parameter controlling (in expectation) the frequency of simplicial interactions. We use this new model, as well as the real-world hypergraphs, to show that multiple stochastic processes exhibit different behaviour when performed on simplicial hypergraphs vs.\ non-simplicial hypergraphs.
\end{abstract}

\section{Introduction}\label{sec:intro}

Many datasets that are typically represented as graphs would be more accurately represented as hypergraphs. For example, in the graph representation of a collaboration dataset, authors are represented as vertices and an edge exists between two vertices if the corresponding authors wrote a paper together~\cite{odda2006ramsey}. Using this representation, it is impossible to distinguish between a three-author paper and three separate two-author papers. In contrast, when we represent a collaboration dataset as a hypergraph we can clearly distinguish between a three-author paper (a single hyperedge) and three separate two-author papers (three distinct hyperedges). Hypergraph representations have proven to be useful for studying collaboration datasets~\cite{juul2022hypergraph}, protein complexes and metabolic reactions~\cite{feng2021hypergraph}, and many other datasets that are traditionally represented as graphs~\cite{lee2024survey}. Moreover, after many years of intense research using graph theory in modelling and mining complex networks~\cite{easley2010networks,jackson2010social,kaminski2021mining,newman2018networks}, hypergraph theory has started to gain considerable traction~\cite{battiston2020networks,benson2018simplicial,benson2021higher,benson2016higher,kamiński2024modularitybasedcommunitydetection,kaminski2019clustering,kaminski2023hypergraph}. It is becoming clear to both researchers and practitioners that higher-order representations are needed to study datasets involving higher-order interactions~\cite{benson2021higher,lambiotte2018understanding,tian2024higher,lee2024survey}.

Similar to hypergraph representations, simplicial complexes provide another way to represent datasets with higher-order interactions and, in some cases, it is not clear what the better model is for a given dataset~\cite{Kim23contagion,Torres21representation,Zhang23interactions}. The notion of \textit{simpliciality} was first introduced by Landry, Young and Eikmeier in~\cite{landry2024simpliciality} as a way of describing how closely a hypergraph resembles its simplicial closure. In their work, they discover that many hypergraphs built from real-world data, although not actually simplicial complexes, resemble their simplicial closures more closely than random hypergraphs. In a similar but distinct study, LaRock and Lambiotte in~\cite{LaRock23encapsulation} find that real-world hypergraphs often contain more instances of hyperedges contained in other hyperedges than in random hypergraphs. The results found in these two papers suggest that real-world hypergraphs are organized in a way where many of the small hyperedges live inside larger hyperedges. In our work, we pursue this idea further and define a ratio and a matrix for hypergraphs, which we call the \textit{simplicial ratio} and \textit{simplicial matrix} respectively, based on the number of instances of hyperedges inside other hyperedges compared to that of a null model. 

The remainder of the paper is organized as follows. In Sections~\ref{subsec:notation} and~\ref{subsec: old measures} we discuss notation as well as the measures for simpliciality given in~\cite{landry2024simpliciality}. Next, we define the simplicial ratio in Section~\ref{subsec:measure}, the simplicial matrix in Section~\ref{subsec:matrix}, and temporal variants in Section~\ref{subsec:temporal}. Then, in Section~\ref{subsec:data} we compute the \measurename{} and \matrixname{} of the same 10 real-world hypergraphs that were studied in~\cite{landry2024simpliciality} and then analyse this data in Section~\ref{subsec:analysis}. In Section~\ref{sec:model} we present a new random graph model that allows for more or less instances of hyperedges inside other hyperedges depending on an input parameter $q \in [0,1]$. In Section~\ref{sec:experiments} we experiment with four stochastic processes, comparing the processes on real-world hypergraphs and on our proposed model for varying $q$. Finally, we conclude and suggest further research in Section~\ref{sec:conclusion}.

\subsection{Notation}\label{subsec:notation}

In this paper, we use the terms graph and edge in lieu of hypergraph and hyperedge. 

A graph $G$ is a pair $(V(G), E(G))$ where $V(G)$ is a set of vertices and $E(G)$ is a collection of edges, i.e., a collection of subsets of vertices. We insist that $\emptyset \notin E(G)$ for any graph $G$. In general, for a graph $G$ and edge $e \in E(G)$, it is acceptable that $|e| = 1$. In this paper, however, we forbid such edges and consider only edges of size at least 2. We write $[n] := \{1,\dots,n\}$ and typically label the vertices in $G$ as $[n]$. A subgraph of a graph $G$ is any graph $H = (V(H), E(H))$ with $V(H) \subseteq V(G)$ and $E(H) \subseteq E(G)$ (note that, as $H$ is itself a graph, any edge $e \in E(H)$ contains only vertices in $V(H)$). For $e \in E(G)$, write $|e|$ for the size of $e$ and, for each positive integer $k$, define
\[
E_k(G) := \{e \in E(G) , |e| = k\} \,.
\]
If $E_k(G) = E(G)$ for some $k > 0$, then we call $G$ a $k$-uniform graph. Note that, for any graph $G$, the graph $G_k := (V(G),E_k(G))$ is a $k$-uniform subgraph of $G$, and 
\[
G = \bigcup_{k > 0} G_k \,,
\]
and thus every graph is the edge-disjoint union of uniform subgraphs. 

A \textit{multigraph} $G$ is a graph that allows edges $e \in E(G)$ with more than one instance of the same vertex (multiset edges) and allows multiple edges $e_1,\dots,e_k \in E(G)$ that are identical (parallel edges); a graph $G$ is \textit{simple} if it contains no multiset edges or parallel edges. Note that all simple graphs are multigraphs. For a multigraph $G$ and a vertex $v$, writing $m_G(v, e)$ for the number of instances of $v$ in $e$, the \textit{degree of $v$ in $G$}, denoted $\mathrm{deg}_G(v)$, is defined as 
\[
\mathrm{deg}_G(v) := \sum_{e \in E(G)} m_G(v, e)\,.
\]
If $G$ is simple, we equivalently have
\[
\mathrm{deg}_G(v) = \Big| \big\{ e \in E(G) ~\big|~ v \in e \big\} \Big| \,.
\]
All graphs in this paper are simple except for the random graphs generated by Algorithm~\ref{alg:chung lu} and Algorithm~\ref{alg:simplicial}.

We use standard notation for probability, i.e., $\p{\cdot}$ for probability, $\E{\cdot}$ for expectation. We write $X \sim \cU$ to mean $X$ is sampled from distribution $\cU$ and write $X_1, \dots, X_k \stackrel{i.i.d.}{\sim} \cU$ to mean $X_1, \dots, X_k$ are sampled independently and identically from distribution $\cU$. For a set $S$, we write $X \in_u S$ to mean that $X$ is chosen uniformly at random from $S$.

\subsection{Measures for simpliciality} \label{subsec: old measures}

In~\cite{landry2024simpliciality}, Landry, Young and Eikmeier establish three distinct measures quantifying how close a graph is to a simplicial complex. The first measure they establish is the \textit{simplicial fraction}. Given a graph $G$, let $S \subseteq E(G)$ be the set of edges such that $e \in S$ if and only if $|e| \geq 3$ and, for all $f \subseteq e$ with $|f| \geq 2$, $f \in E(G)$. Then the \textit{simplicial fraction} of $G$, written $\sigma_\mathrm{SF}(G)$, is defined as
\[
\sigma_\mathrm{SF}(G) := \frac{|S|}{\left| \bigcup_{k \geq 3} E_k(G)\right|} \,.
\]
In words, $\sigma_\mathrm{SF}(G)$ is the proportion of edges of size at least 3 in $E(G)$ that satisfy downward closure. 

The second and third measures that Landry, Young and Eikmeier establish are the \textit{edit simpliciality} and the \textit{face edit simpliciality}, respectively. For a graph $G$, define the \textit{$k$-closure}, written $\overline{G}_k$, as the graph $(V(\overline{G}_k), E(\overline{G}_k))$ where 
\begin{align*}
V(\overline{G}_k) &= V(G),\\
E(\overline{G}_k) &= \Big\{e \subseteq V(G) \, \Big| \, |e| \geq k \text{ and } e \subseteq f \text{ for some } f \in E(G) \Big\}. 
\end{align*} 
Then the \textit{edit simpliciality} of $G$, written $\sigma_\mathrm{ES}(G)$, is defined as 
\[
\sigma_\mathrm{ES}(G) := \frac{|E(G)|}{|E(\overline{G}_2)|} \,.
\]
Thus, $1 - \sigma_\mathrm{ES}(G)$ is the (normalized) number of additional edges needed to turn $G$ into its $2$-closure. Similarly, the \textit{face edit simpliciality} of $G$, written $\sigma_\mathrm{FES}(G)$, is the average edit simpliciality across all induced subgraphs defined by maximal edges (edges not contained in other edges) in $\bigcup_{k \geq 3} E_k(G)$.

Using the three measures defined above, Landry, Young and Eikmeier show that real-world graphs are significantly more simplicial than graphs sampled from random models. However, they also note some unique short-comings of each measure. In the following two examples, we show some additional short-comings that are shared among all three measures. The first example shows that none of the measures properly capture the \textit{types} of simplicial relationships in a graph. 

\begin{example}\label{ex:short-coming}
Fix $n,k$ with $5 \leq k$ and $3k \leq n$. Let $G_1$ be a graph on the vertex set $[n]$ and with three edges $\{1,\dots,k\}, \{k+1,\dots,2k\}, \{2k+1,\dots,3k\}$ of size $k$ and three edges $\{1,2,3\}, \{k+1,k+2,k+3\}, \{2k+1,2k+2,2k+3\}$ of size 3. Let $G_2$ be a graph on the same vertex set and with the same three edges $\{1,\dots,k\}, \{k+1,\dots,2k\}, \{2k+1,\dots,3k\}$ of size $k$, but now with three edges $\{1,\dots,k-1\}, \{k+1,\dots,2k-1\}, \{2k+1,\dots,3k-1\}$ of size $k-1$. See Figure~\ref{fig:short-coming} for an illustration of $G_1$ and $G_2$ with $n=18$ and $k=6$.

\begin{figure}[ht]
\[
\includegraphics[scale=0.6, trim = {1cm 11cm 1cm 4cm}]{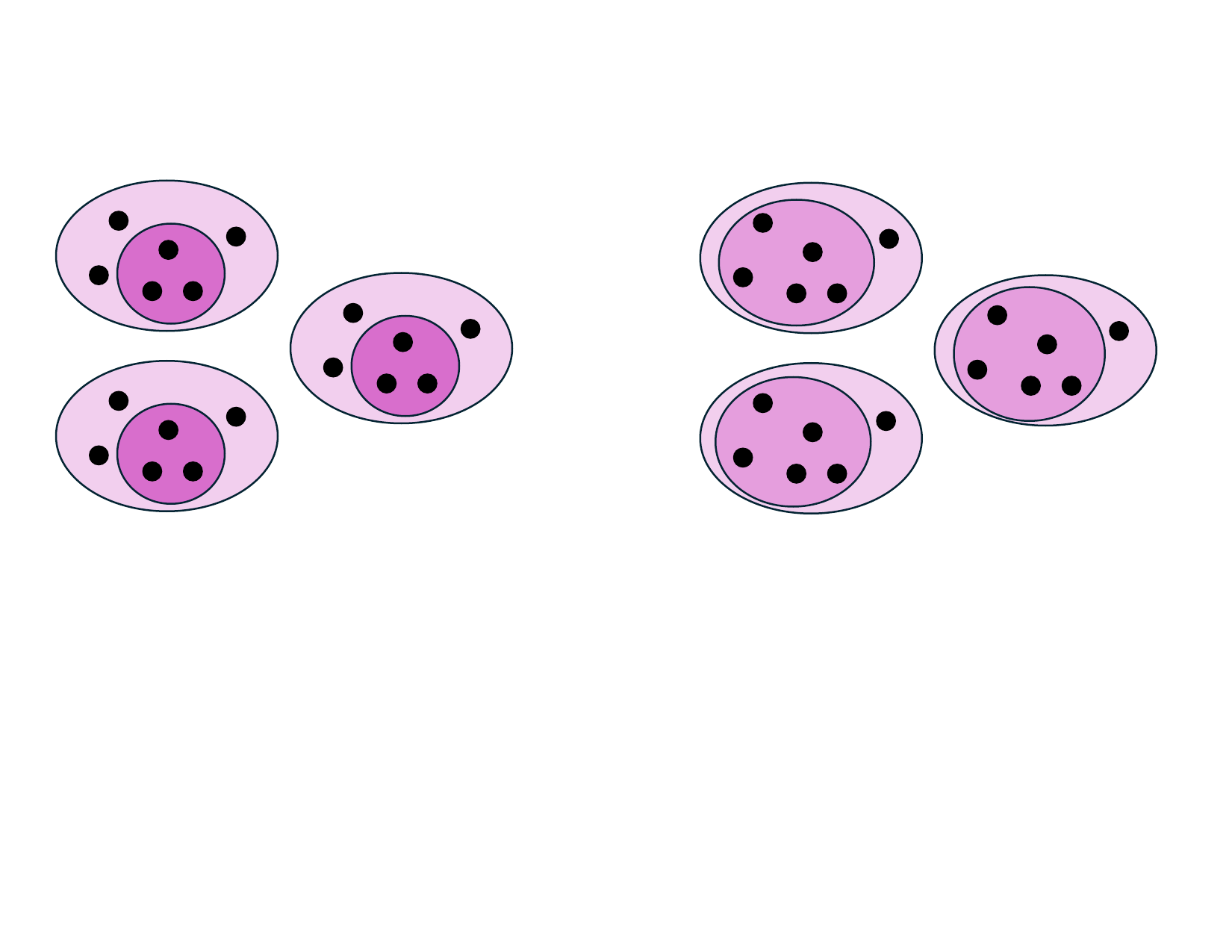}
\]
\caption{(left) a graph $G_1$ with 18 vertices, 3 edges of size 6, and 3 edges of size 3, and (right) a graph $G_2$ with 18 vertices, 3 edges of size 6, and 3 edges of size 5. We have $\sigma_{\mathrm{SF}}(G_1) = \sigma_{\mathrm{SF}}(G_2) = 0$, $\sigma_{\mathrm{ES}}(G_1) = \sigma_{\mathrm{ES}}(G_2) = 2/57$, and $\sigma_{\mathrm{FES}}(G_1) = \sigma_{\mathrm{FES}}(G_2) = 2/57$.}
\label{fig:short-coming}
\end{figure}

With $G_1$ and $G_2$ as defined above, we have 
\begin{align*}
\sigma_\mathrm{SF}(G_1) = \sigma_\mathrm{SF}(G_2) &= 0 \,,\\
\sigma_\mathrm{ES}(G_1) = \sigma_\mathrm{ES}(G_2) &= \frac{2 \cdot 3}{ (2^k-k-1) \cdot 3} = \frac{2}{2^k-k-1} \,, \text{ and}\\
\sigma_\mathrm{FES}(G_1) = \sigma_\mathrm{FES}(G_2) &= \frac{2}{2^k-k-1} \,,
\end{align*}
the value $2^k-k-1$ coming from the fact that there are $2^k$ subsets, $k$ of which are subsets of size 1, and 1 of which is the empty set. Thus, by all three measures, $G_1$ and $G_2$ are equally simplicial. However, qualitatively the simplicial relationships in $G_1$ are different than in $G_2$. Consider, for example, edges $e_3, e_5, e_6$ in an \ER random graph on $n$ vertices with $|e_3| = 3, |e_5| = 5$ and $|e_6| = 6$. Then, the probability of $e_3 \subset e_6$ (as in $G_1$) is of order $n^{-3}$, whereas the probability of $e_5 \subset e_6$ (as in $G_2$) is of order $n^{-5}$. 
\end{example}

The second example shows that, while the three measures are good indicators of how close a graph is to its $2$-closure, none of the measures are good indicators of how common it is to see edges inside of other edges in the graph. 

\begin{example}\label{ex:short-coming 2}
Let $G_1$ and $G_2$ be as shown in Figure~\ref{fig:short-coming_2}. There is a clear, strong simplicial structure in $G_1$, and there is clearly no simplicial structure in $G_2$. However, in both graphs, the simplicial fraction is 0 (none of the edges satisfy downward closure). Moreover, the edit simpliciality of $G_1$ is $4/57 \approx 0.07$ and of $G_2$ is $3/41 \approx 0.07$. Likewise, the face edit simpliciality of $G_1$ is $4/57 \approx 0.07$ and of $G_2$ is 
\[
\frac{1}{3}\left(\frac{1}{26} + \frac{1}{11} + \frac{1}{4} \right) \approx 0.13 \,. 
\] 
\end{example}

\begin{figure}[ht]
\[
\includegraphics[scale=0.5, trim = {1cm 12cm 1cm 4cm}]{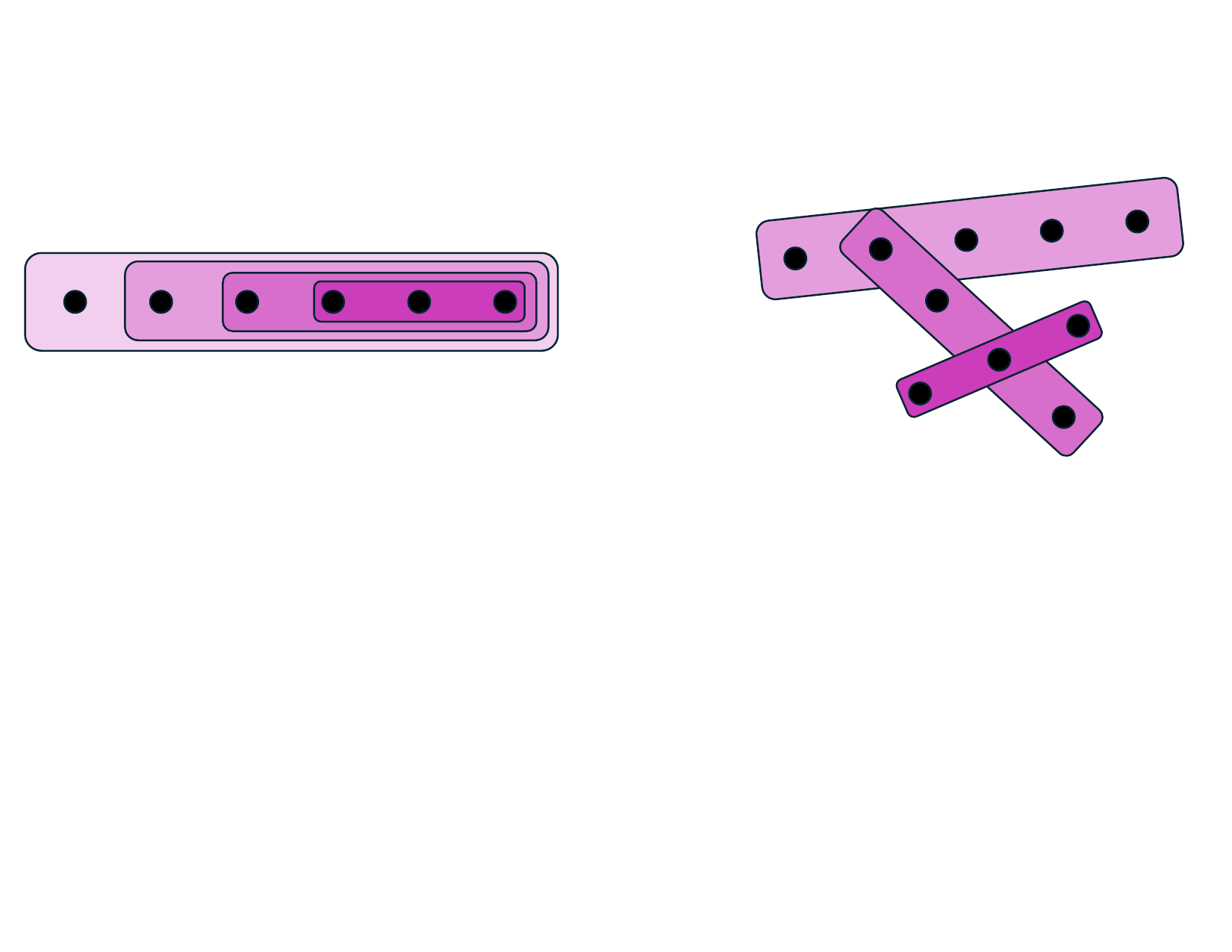}
\]
\caption{(left) a graph $G_1$ with 6 vertices and 4 edges, and (right) a graph $G_2$ with 10 vertices and 3 edges. We have $\sigma_{\mathrm{SF}}(G_1) = 0$, $\sigma_{\mathrm{ES}}(G_1) \approx 0.07$, $\sigma_{\mathrm{FES}}(G_1) \approx 0.07$, and $\sigma_{\mathrm{SF}}(G_2) = 0$, $\sigma_{\mathrm{ES}}(G_2) \approx 0.07$, $\sigma_{\mathrm{FES}}(G_2) \approx 0.13$.}
\label{fig:short-coming_2}
\end{figure}

Thus, $G_1$ and $G_2$ are equally simplicial according to the simplicial fraction and the edit simpliciality and, more strikingly, $G_1$ is \emph{less} simplicial than $G_2$ according to the face edit simpliciality. 

As mentioned previously, Examples~\ref{ex:short-coming} and~\ref{ex:short-coming 2} are not issues when we treat the simplicial fraction, edit simpliciality, and face edit simpliciality as measures of how close a graph is to its $2$-closure (as was their intended purpose). Instead, these examples suggest that if we want to understand the extent to which edges sit inside other edges in real-world networks then we need a new type of scoring system. 

\section{A new approach to simpliciality}\label{sec:definitions}

We aim to quantify a graph based on the frequency and rarity of edges inside other edges when compared to a null model. The metrics we present focus on the regime where data is ``slightly'' more simplicial than random (and so nearly-complete large simplices are extremely rare), while previous metrics focus on the regime where data is ``almost completely'' simplicial. The motivation behind these metrics is that the former regime is often more appropriate in real networks.

\subsubsection*{The hypergraph Chung-Lu model}

In the material to come, we frequently reference the hypergraph Chung-Lu model. The original model was defined for graphs~\cite{chung2006complex} and has been extensively studied since then. More recently, the model was generalized to other structures, including geometric graphs~\cite{kaminski2020unsupervised,kaminski2022multi} (both undirected and directed variants) as well as hypergraphs~\cite{kaminski2019clustering}. We give an algorithm for building the hypergraph model, conditioned on the number of edges, and point the reader to~\cite{kaminski2019clustering} for a full description of the model.

Let $(d_1,\dots,d_n)$ be a degree sequence on vertex set $[n]$ and let $(m_{k_{\mathrm{min}}},\dots,m_{k_{\mathrm{max}}})$ be a sequence of edge sizes where $m_k$ represents the number of edges of size $k$. Then, writing $p(\cdot)$ for the probability distribution with $p(v) = d_v / \sum_{u \in [n]} d_u$ for all $v \in [n]$, we first give the algorithm that generates a Chung-Lu edge of a given size.

\begin{algorithm}[ht]
\caption{Chung-Lu edge. }\label{alg:chung lu edge}
\begin{algorithmic}[1]
\Require $(d_1,\dots,d_n)$, $k$
\State Sample $e[1],\ldots,e[k] \stackrel{i.i.d.}{\sim} p(\cdot)$.
\State Return $\{e[1],\dots,e[k]\}$
\end{algorithmic}
\end{algorithm}

We now give the algorithm that generates a Chung-Lu graph. 

\begin{algorithm}[ht]
\caption{Chung-Lu Model. }\label{alg:chung lu}
\begin{algorithmic}[1]
\Require $(d_1,\dots,d_n)$, $(m_{k_{\mathrm{min}}},\dots,m_{k_{\mathrm{max}}})$.
\State Initialize edge list $E = \{\}$.
\For{ $k \in \{k_{\mathrm{min}},\dots ,k_{\mathrm{max}}\}$}
\For{ $i \in [m_k]$}
\State sample $e \sim \mathbf{Algorithm~\ref{alg:chung lu edge}}\Big((d_1,\dots,d_n), k\Big)$.
\State Set $E = E \cup \{e\}$.
\EndFor 
\EndFor 
\State Return $G = ([n], E)$.
\end{algorithmic}
\end{algorithm}

For a graph $G$ with degree sequence $\mathbf{d} = (d_1,\dots,d_n)$ and edge size sequence $\mathbf{m} = (m_{k_{\mathrm{min}}},\dots,m_{k_{\mathrm{max}}})$, we write $\hat{G} \sim \mathrm{CL}(G)$ to mean $\hat{G} \sim \mathrm{CL}(\mathbf{d}, \mathbf{m})$, where $\mathrm{CL}(\mathbf{d}, \mathbf{m})$ is the random graph returned by {Algorithm~\ref{alg:chung lu}}. A key feature of the Chung-Lu model is that the degree sequence is preserved in expectation. 
\begin{lemma}
Let $\hat{G} \sim \mathrm{CL}(G)$ for some graph $G$. Then 
\[
\E{\mathrm{deg}_{\hat{G}}(v)} = \mathrm{deg}_G(v)
\]
for all $v \in [n]$. 
\end{lemma}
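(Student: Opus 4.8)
The plan is to compute $\E{\mathrm{deg}_{\hat{G}}(v)}$ directly from the definition of degree in a multigraph, using linearity of expectation over all edges generated by Algorithm~\ref{alg:chung lu}. Since $\hat{G}$ is (possibly) a multigraph, recall that $\mathrm{deg}_{\hat{G}}(v) = \sum_{e \in E(\hat{G})} m_{\hat{G}}(v,e)$, where $m_{\hat{G}}(v,e)$ counts instances of $v$ in $e$. The edges are generated in a double loop: for each $k \in \{k_{\mathrm{min}},\dots,k_{\mathrm{max}}\}$ and each $i \in [m_k]$, one edge is sampled via Algorithm~\ref{alg:chung lu edge}, i.e., $k$ i.i.d.\ samples $e[1],\dots,e[k] \sim p(\cdot)$ are drawn and the edge records each of them. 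So the total number of ``slots'' landing on vertices across the whole run is $\sum_k k \, m_k$, and each slot independently equals $v$ with probability $p(v) = d_v / \sum_{u} d_u$.

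The key computation is then
\[
\E{\mathrm{deg}_{\hat{G}}(v)} = \sum_{k=k_{\mathrm{min}}}^{k_{\mathrm{max}}} \sum_{i=1}^{m_k} \sum_{j=1}^{k} \p{e[j] = v} = \sum_{k=k_{\mathrm{min}}}^{k_{\mathrm{max}}} m_k \, k \cdot \frac{d_v}{\sum_{u \in [n]} d_u} = \frac{d_v \sum_{k} k\, m_k}{\sum_{u \in [n]} d_u}.
\]
To finish, I would use the handshaking-type identity relating the edge-size sequence to the degree sequence of $G$: since $\mathbf{m}$ is the edge-size sequence of $G$, we have $\sum_{k} k\, m_k = \sum_{e \in E(G)} |e| = \sum_{u \in [n]} \mathrm{deg}_G(u) = \sum_{u \in [n]} d_u$. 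Substituting, the ratio collapses and $\E{\mathrm{deg}_{\hat{G}}(v)} = d_v = \mathrm{deg}_G(v)$, as claimed.

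The only subtle point — and the one I'd be careful to state explicitly rather than an actual obstacle — is that the degree in $\hat{G}$ must be counted \emph{with multiplicity} (multiset membership), because Algorithm~\ref{alg:chung lu edge} returns $\{e[1],\dots,e[k]\}$ as a multiset and repeated draws of the same vertex are possible; if one instead counted $v$'s contribution to an edge as $0$ or $1$ (simple-graph convention), the identity would fail and one would only get an inequality $\E{\mathrm{deg}_{\hat{G}}(v)} \le d_v$. With the multigraph convention adopted in the paper for Algorithm~\ref{alg:chung lu} outputs, the argument above is exact and requires nothing beyond linearity of expectation and the handshaking identity for $G$.
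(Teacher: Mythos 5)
Your proof is correct and is essentially the paper's own argument: both count the expected number of occurrences of $v$ across all $\sum_k k\,m_k = \sum_{e\in E(G)}|e|$ independent vertex slots, each hitting $v$ with probability $p(v)$, and then collapse via the hypergraph handshaking identity. Your explicit remark that the degree must be counted with multiset multiplicity is a point the paper makes only implicitly, but the route is the same.
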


\begin{proof}
Let $d_v := \mathrm{deg}_G(v)$ for all $v \in [n]$. First, note that every vertex in every edge of $\hat{G}$ is sampled independently with probability $p$, where $p(v) = \frac{d_v}{\sum_{u \in [n]} d_u}$. Thus, the expected total occurrence of $v$ in $E(\hat{G})$ is 
\begin{align*}
p(v) \sum_{e \in E(G)} |e| 
&=
\left( \frac{d_v}{\sum_{u \in [n]} d_u} \right) \sum_{e \in E(G)} |e| 
=
\left( \frac{d_v}{\sum_{u \in [n]} d_u} \right) \sum_{u \in [n]} d_u 
=
d_v \,,
\end{align*}
the second equality coming from the hypergraph counterpart of the hand-shaking lemma. Given that the total occurrence of $v$ in $E(\hat{G})$ is precisely $\mathrm{deg}_{\hat{G}}(v)$, the lemma follows. 
\end{proof}

\subsection{The \measurename}\label{subsec:measure}

We are now ready to define the graph quantity at the heart of this paper. In essence, this quantity tells us how surprising it is to see the number of simplicial pairs in a given graph.

For a graph $G$, a \textit{simplicial pair in $G$} is a pair of distinct edges $e_1,e_2 \in E(G)$ with $e_1 \subset e_2$. Let $\pairs{G}$ be the number of simplicial pairs in $G$. 

Let $G$ be a graph and let $\hat{G} \sim \mathrm{CL}(G)$ conditioned on $\hat{G}$ having no multiset edges. Then the \textit{\measurename{}}, denoted by $\measure{G}$, is defined as
\[
\measure{G} := \frac{\pairs{G}}{\E{\pairs{\hat{G}}}} \,,
\]
if $\E{\pairs{\hat{G}}} > 0$, and $\measure{G} := 1$ otherwise. In words, $\measure{G}$ is the ratio of the number of simplicial pairs to the expected number of simplicial pairs.

\begin{remark}
If $\E{\pairs{\hat{G}}} = 0$ then it is necessarily the case that $\pairs{G} = 0$, since it is always true that $\p{\hat{G} = G} > 0$. Moreover, if $\pairs{G} = 0$ and $\E{\pairs{\hat{G}}} = 0$ then the number of simplicial pairs is as expected and so we define $\measure{G} = 1$. 
\end{remark}

\begin{remark}
We have mentioned already that the sizes of the edges in a simplicial pair are important. For this reason, we condition on $\hat{G} \sim \mathrm{CL}(G)$ having no multiset edges. 
\end{remark}

\begin{remark}
Our choice of the Chung-Lu model is not necessary for defining the \measurename{}. One could equivalently define the \measurename{} by taking expectations with respect to any model: the configuration model, Erd\H{o}s-R\'{e}nyi model, Stochastic Block Model, ABCD model, etc. We choose to use the Chung-Lu model as, in our opinion, it achieves the best balance of (a) retaining important features of a graph and (b) allowing for fast approximations of $\E{\pairs{\hat{G}}}$. 
\end{remark}

\begin{remark}
As mentioned in the previous remark, we \textit{approximate} $\E{\pairs{\hat{G}}}$ rather than compute this expectation exactly. For a graph $G$, computing $\E{\pairs{\hat{G}}}$ is quite difficult as we discuss in the open problems presented in Section~\ref{subsec:further research}. We approximate using a Monte Carlo estimator which is detailed in Appendix~\ref{app: algs}.
\end{remark}

\subsubsection*{Examples}

Let us revisit Examples~\ref{ex:short-coming} and~\ref{ex:short-coming 2}. 

Starting with Example~\ref{ex:short-coming}, the number of simplicial pairs in both graphs is 3. However, in $G_1$ the expected number of simplicial pairs is $\approx 0.3$, and in $G_2$ this expectation is $\approx 0.008$. Thus, $\measure{G_1} \approx 10$, whereas $\measure{G_2} \approx 380$, suggesting that the number of simplicial relationships in $G_2$ is far more surprising than in $G_1$. This result confirms that the simplicial ratio weighs different types of simplicial pairs differently.

Continuing with Example~\ref{ex:short-coming 2}, we have that $\pairs{G_1} = 6$ and $\E{\pairs{\hat{G}}} \approx 4.3$, meaning $\measure{G_1} \approx 1.4$, whereas $\pairs{G_2} = 0$ and $\E{\pairs{\hat{G_2}}} \approx 0.2 > 0$, meaning $\measure{G_2} = 0$. Thus, the simplicial ratio can clearly distinguish $G_1$ and $G_2$. 

By computing the \measurename{} of the graphs in Examples~\ref{ex:short-coming} and~\ref{ex:short-coming 2}, we see a clear distinction between the three measures given in \cite{landry2024simpliciality} and the \measurename{} that we present here: the simplicial fraction, edit simpliciality, and face edit simpliciality are all ways of measuring how close a graph is to its induced simplicial complex, whereas the \measurename{} is a way to measure how \textit{surprisingly simplicial} a graph is.

\subsection{The \matrixname}\label{subsec:matrix}

For a graph $G$, write $\pairs{G,i,j}$ for the number of simplicial pairs $(e_1,e_2)$ in $G$ with $|e_1|=i$ and $|e_2|=j$ with $i<j$. Then, letting $\hat{G} \sim \mathrm{CL}(G)$ conditioned on having no multiset edges, the \matrixname{} of $G$, denoted by $\measurematrix{G}$, is the partial matrix with cell $(i,j)$ equalling
\[
\measurematrix{G,i,j} := \frac{\pairs{G,i,j}}{\E{\pairs{\hat{G}, i, j}}}
\] 
whenever $i<j$ and $G$ contains edges of size $i$ and of size $j$ (and substituting 0 if there are no simplicial pairs of this type), and with cell $(i,j)$ being empty otherwise. 

\begin{remark}
We once again approximate $\E{\pairs{\hat{G},i,j}}$ via the sampling technique found in Appendix~\ref{app: algs}. 
\end{remark}

Intuitively, the \matrixname{} ``unpacks'' the \measurename{} and shows how powerful the simplicial interactions between edges of all different sizes are. More formally, the \matrixname{} and \measurename{} of $G$ satisfy the following weighted sum.
\[
\measure{G} = \sum_{i<j} w_{i,j} \cdot \measurematrix{G,i,j}
\]
where 
\[
w_{i,j} := \frac{\E{\pairs{\hat{G},i,j}}}{\E{\pairs{\hat{G}}}} \,, \hspace{1cm} \sum_{i<j} w_{i,j} = 1 \,.
\]
We will see in Section~\ref{sec:empirical} that the \matrixname{} reveals information about real-world graphs that the \measurename{} alone does not. In particular, a hypothesis we make in this paper, as suggest by these matrices, is that \textit{the composition of an edge in a real-world network becomes more dependent on simpliciality as the edge size increases.}

\subsubsection*{Examples}

We again revisit Examples~\ref{ex:short-coming} and~\ref{ex:short-coming 2}. In Example~\ref{ex:short-coming}, $\measurematrix{G_1}$ contains one non-empty cell, $(3,6)$, with value $\approx 10$, and $\measurematrix{G_2}$ contains one non-empty cell, $(5,6)$, with value $\approx 380$. 

Example~\ref{ex:short-coming 2} is more interesting as $G_1$ contains simplicial pairs of various types. For  $G_1$, we have
\[
\measurematrix{G_1} \approx
\left[
\begin{array}{cccccc}
\emptyset & \emptyset & \emptyset & \emptyset & \emptyset & \emptyset \\
\emptyset & \emptyset & \emptyset & \emptyset & \emptyset & \emptyset \\
\emptyset & \emptyset & \emptyset & \mathbf{3.8} & \mathbf{1.7} & \mathbf{1} \\
\emptyset & \emptyset & \emptyset & \emptyset & \mathbf{2.4} & \mathbf{1} \\
\emptyset & \emptyset & \emptyset & \emptyset & \emptyset & \mathbf{1} \\
\emptyset & \emptyset & \emptyset & \emptyset & \emptyset & \emptyset \\
\end{array}
\right] \,,
\]
and for $G_2$ we have 
\[
\measurematrix{G_2} \approx
\left[
\begin{array}{ccccc}
\emptyset & \emptyset & \emptyset & \emptyset & \emptyset \\
\emptyset & \emptyset & \emptyset & \emptyset & \emptyset \\
\emptyset & \emptyset & \emptyset & \mathbf{0} & \mathbf{0} \\
\emptyset & \emptyset & \emptyset & \emptyset & \mathbf{0} \\
\emptyset & \emptyset & \emptyset & \emptyset & \emptyset \\
\end{array}
\right] \,.
\]
The \matrixname{} for $G_1$ unpacks the information about its simplicial interactions. Indeed, the \measurename{} simply tells us that the number of simplicial pairs is 1.4 times more than expected. On the other hand, the \matrixname{} tells us that all 3 simplicial pairs involving the edge of size 6 are to be expected, whereas the other three simplicial pairs are at least somewhat surprising. We can also see that the existence of the $(3, 4)$ pair in $G_1$ is more surprising than the existence of the $(3, 5)$ pair, which is in turn more surprising than the existence of the $(3, 6)$ pair. In general, given a graph $G$ and distinct edge sizes $i<j<k$, if $G$ has the property that $|E_j(G)| \leq |E_k(G)|$ then it follows from the sampling process in {Algorithm~\ref{alg:chung lu edge}} that $\E{\pairs{G, i, j}} \leq \E{\pairs{G, i, k}}$. In the case of Example~\ref{ex:short-coming 2}, we have that $|E_4(G_1)| = |E_5(G_1)| = |E_6(G_1)| = 1$ and $\E{\pairs{G_1, 3, 4}} \approx 0.26$, $\E{\pairs{G_1, 3, 5}} \approx 0.59$, and $\E{\pairs{G_1, 3, 6}} = 1$. 

\subsection{Including a temporal element}\label{subsec:temporal}

Many networks (both real and synthetic) are not merely static graphs, but rather evolving process with edges forming over time. In these evolving processes, there are two distinct formations of a simplicial pair: either a small edge could form first, followed by a larger (superset) edge, or a large edge could form first, followed by a smaller (subset) edge. In the context of a collaboration graph, a ``bottom-up'' formation is a group of collaborators who invite more people for a future collaboration, whereas a ``top-down'' formation is a group who exclude some people for a future collaboration. At least in this context, there is a substantial difference between bottom-up simplicial pairs and top-down simplicial pairs, and we would ultimately like to know how different networks bias towards or against the two types of simplicial formations. For this reason, we include a version of the \measurename{} and of the \matrixname{} that accounts for time-stamped edges. In the definitions to come, we assume that no two edges are born at the exact same time. 

Let $G$ be an evolving graph with time-stamped edges $E(G) = (e_1,\dots,e_m)$ such that $e_i$ was generated before $e_{i+1}$ for all $1\leq i < m$. Next, let $\uppairs{G}$ be the number of simplicial pairs $(e_i,e_j)$ in $G$ with $i < j$ and $|e_i| < |e_j|$, and let $\downpairs{G}$ be the number of simplicial pairs $(e_i, e_j)$ with $i > j$ and $|e_i| < |e_j|$. Finally, let $\hat{G} \sim \mathrm{CL}(G)$ and assign a uniformly random ordering to the edges of $\hat{G}$. Then the \upname{} and \downname{} of $G$, denoted $\upmeasure{G}$ and $\downmeasure{G}$ respectively, are defined as 
\[
\upmeasure{G} := \frac{\uppairs{G}}{\E{\uppairs{\hat{G}}}}
\hspace{1cm} \text{and} \hspace{1cm}
\downmeasure{G} := \frac{\downpairs{G}}{\E{\downpairs{\hat{G}}}} \,.
\]

\begin{remark}
By symmetry, we have that $\E{\uppairs{\hat{G}}} = \E{\downpairs{\hat{G}}} =  \frac{1}{2} \cdot \E{\pairs{\hat{G}}}.$ Thus, we can equivalently define the \upname{} and \downname{} respectively as 
\[
\frac{2 \cdot \uppairs{G}}{\E{\pairs{\hat{G}}}} \, \hspace{1cm} \text{and} \hspace{1cm} \frac{2 \cdot \downpairs{G}}{\E{\pairs{\hat{G}}}} \,.
\]
\end{remark}

For the temporal version of the \matrixname{} we distinguish between bottom-up and top-down simplicial pairs by their location in the matrix. For a temporal graph $G$ with edge ordering $E(G) = (e_1,\dots,e_m)$ and for $k<\ell$, write $\uppairs{G,k,\ell}$ for the number of simplicial pairs $(e_i,e_j)$ such that $i<j$, $|e_i|=k$, and $|e_j|=\ell$. Likewise, write $\downpairs{G,k,\ell}$ for the number of simplicial pairs $(e_i,e_j)$ such that $i > j$, $|e_i|=k$ and $|e_j|=\ell$. Then the \timematrixname{}, denoted $\timematrix{G}$, is the partial matrix with cell $(k,\ell)$ equalling 
\[
\timematrix{G, k, \ell} := \frac{\uppairs{G,k,\ell}}{\E{\uppairs{\hat{G}, k, \ell}}} \,,
\]
cell $(\ell,k)$ equalling 
\[
\timematrix{G, \ell, k} := \frac{\downpairs{G,k,\ell}}{\E{\downpairs{\hat{G}, k, \ell}}} \,,
\]
for all valid $k<\ell$, and cells $(k,\ell)$ and $(\ell,k)$ being empty otherwise.

\section{Empirical results}\label{sec:empirical}

In this section, we compute the simplicial ratio and simplicial matrix, both with and without a temporal element where applicable, for the same 10 graphs that were analysed in~\cite{landry2024simpliciality}. We then comment on the data and build some hypotheses about the simplicial nature of real networks.

The 10 graphs are all taken from \cite{Landry_XGI_A_Python_2023} and full descriptions can be found there. We paraphrase and summarize the descriptions below.

\medskip
\noindent
\textbf{contact-primary-school:} a temporal graph where nodes are primary students and edges are instances of contact (physical proximity) between students. 

\medskip
\noindent
\textbf{contact-high-school:} the same as contact-primary-school except with high-school students.

\medskip
\noindent
\textbf{hospital-lyon:} the same as contact-primary-school and contact-high-school except with patients and health-care workers in a hospital. 

\medskip
\noindent
\textbf{email-enron:} a temporal graph where nodes are email-addresses and edges comprise the sender and receivers of emails. 

\medskip
\noindent
\textbf{email-eu:} the same as email-enron except built from a different organization.

\medskip
\noindent
\textbf{diseasome:} a static (non-temporal) graph where nodes are diseases and edges are collections of diseases with a common gene.

\medskip
\noindent
\textbf{disgenenet:} a static graph where nodes are genes and edges are collections of genes found in a disease. In other words, disgenenet is precisely the line-graph of diseasome.

\medskip
\noindent
\textbf{ndc-substances:} a static graph where nodes are substances and edges are collections of substances that make up various drugs.

\medskip
\noindent
\textbf{congress-bills:} a temporal graph where nodes are US Congresspersons and edges comprise the sponsor and co-sponsors of legislative bills put forth in both the House of Representatives and the Senate.

\medskip
\noindent
\textbf{tags-ask-ubuntu:} a temporal graph where nodes are tags and edges are collections of tags applied to questions on the website \url{askubuntu.com}.

\medskip
\noindent
For each graph, we restrict to edges of sizes 2 through 11, as is the case in \cite{landry2024simpliciality}
We throw away multi-edges, only keeping the first occurrence of each edge in the case of temporal graphs. We approximate $\E{\hat{G}}$ using our Chung-Lu sampling technique presented in Appendix~\ref{app: algs}. 

\subsection{The data} \label{subsec:data}
In Table \ref{tab:ten}, we show the simplicial ratios as well as useful information about each graph.
In Figure~\ref{fig:sm1} we show the simplicial matrices of these graphs and in Figure~\ref{fig:sm2} we show the temporal matrices of the 7 temporal graphs. For readability we show only the non-empty cells of the partial matrices and omit cells involving edges of size greater than 5. Figure~\ref{fig:example shorthand} shows the simplified presentation of the simplicial matrix of $G_1$ from Example~\ref{ex:short-coming 2}. 

\begin{table}[H]\label{tbl:all-sr}
\[
\begin{array}{|c|c|c|c|c|c|c|c|c|}
\hline
G & |V(G)| & |E(G)| & [|E_2|,|E_3|,|E_4|,|E_{\geq 5}|] & \measure{G} & \upmeasure{G} & \downmeasure{G}\\
\hline
\text{disgenenet} & 1982 & 760 & [157, 139, 93, 371] & 28.81 & \text{n.a.} & \text{n.a.} \\
\hline
\text{contact-h.s.} & 327 & 7818 & [5498, 2091, 222, 7] & 6.68 & 11.19 & 2.17 \\
\hline
\text{diseasome} & 516 & 314 & [153, 92, 26, 43] & 6.49 & \text{n.a.} & \text{n.a.} \\
\hline
\text{email-eu} & 967 & 23729 & [13\mathrm{k},5\mathrm{k},2\mathrm{k},4\mathrm{k}] & 5.19 & 5.77 & 3.72 \\
\hline
\text{email-enron} & 143 & 1442 & [809, 317, 138, 178] & 4.96 & 6.98 & 2.94 \\
\hline
\text{congress-bills} & 1715 & 58788 & [14\mathrm{k}, 10\mathrm{k}, 8\mathrm{k}, 27\mathrm{k}] & 4.46 & 5.23 & 3.69 \\
\hline
\text{ndc-substances} & 2740 & 4754 & [1130, 745, 535, 2344] & 4.22 & \text{n.a.} & \text{n.a.} \\
\hline
\text{contact-p.s.} & 242 & 12704 & [7748, 4600, 347, 9] & 2.74 & 4.82 & 0.66 \\
\hline
\text{hospital-lyon} & 75 & 1824 & [1107, 657, 58, 2] & 0.94 & 1.71 & 0.17 \\
\hline
\text{tags-ask-ubuntu} & 3021 & 145053 & [28\mathrm{k},52\mathrm{k},39\mathrm{k},25\mathrm{k}] & 0.69 & 1.09 & 0.29 \\
\hline
\end{array}
\]
\caption{The \measurename{} of 10 real networks and the corresponding \upname{} and \downname{} for the 7 temporal networks. The graphs are ordered according to $\measure{G}$, from largest to smallest.}
\label{tab:ten}
\end{table}

\medskip

\begin{figure}[ht]
\[
\includegraphics[scale=0.75, trim = {2cm 13cm 2cm 0.5cm}]{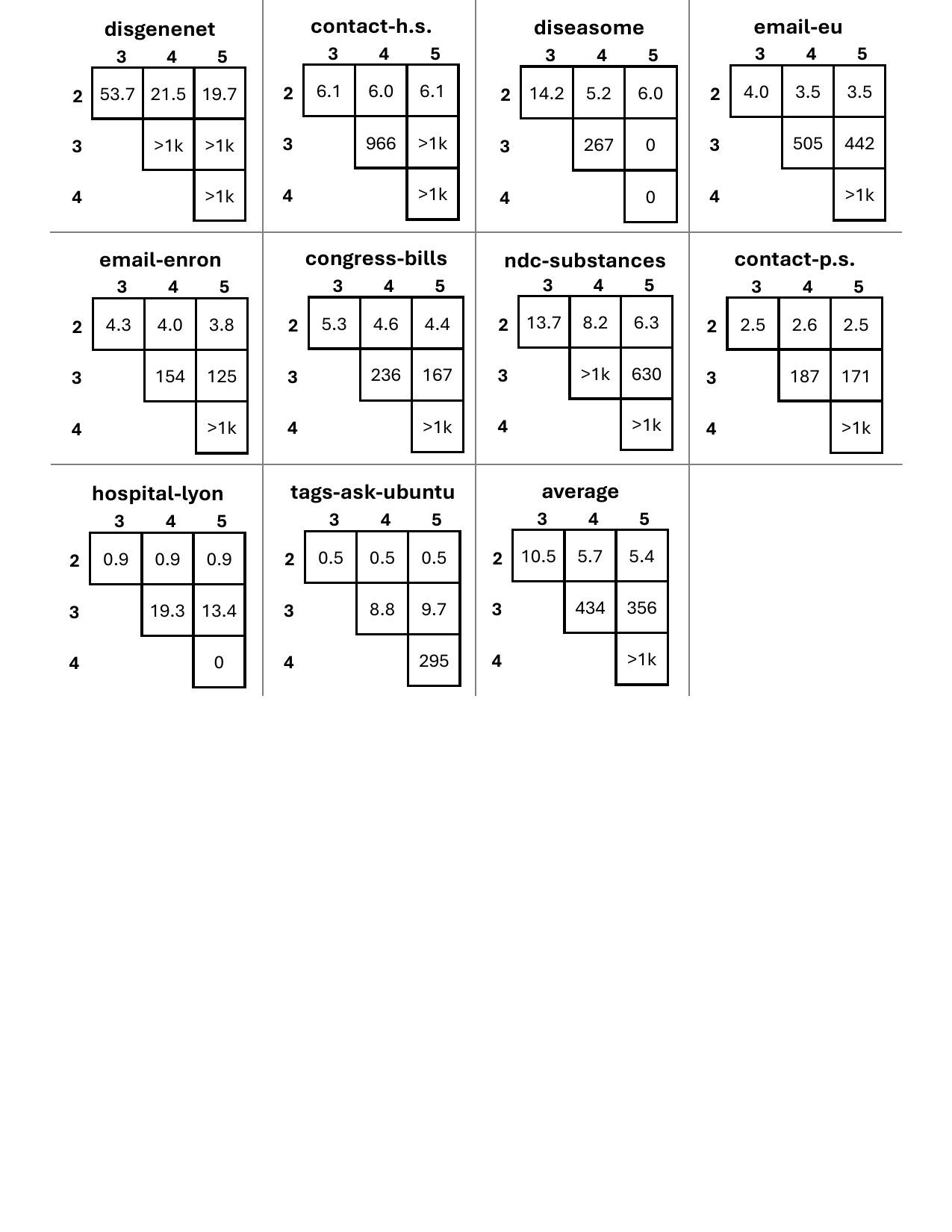}
\]
\caption{The \matrixname{} of 10 real networks, as well as the cell-wise average matrix. For each graph $G$, only non-empty cells of $\measurematrix{G}$ are shown, and cells involving edges of size greater than 5 are omitted. The value of a cell is replaced with ``$>1\mathrm{k}$'' whenever the value is above 1000.}
\label{fig:sm1}
\end{figure}

\clearpage 

\begin{figure}[ht]
\[
\includegraphics[scale=0.7, trim = {2cm 8cm 2cm 0.5cm}]{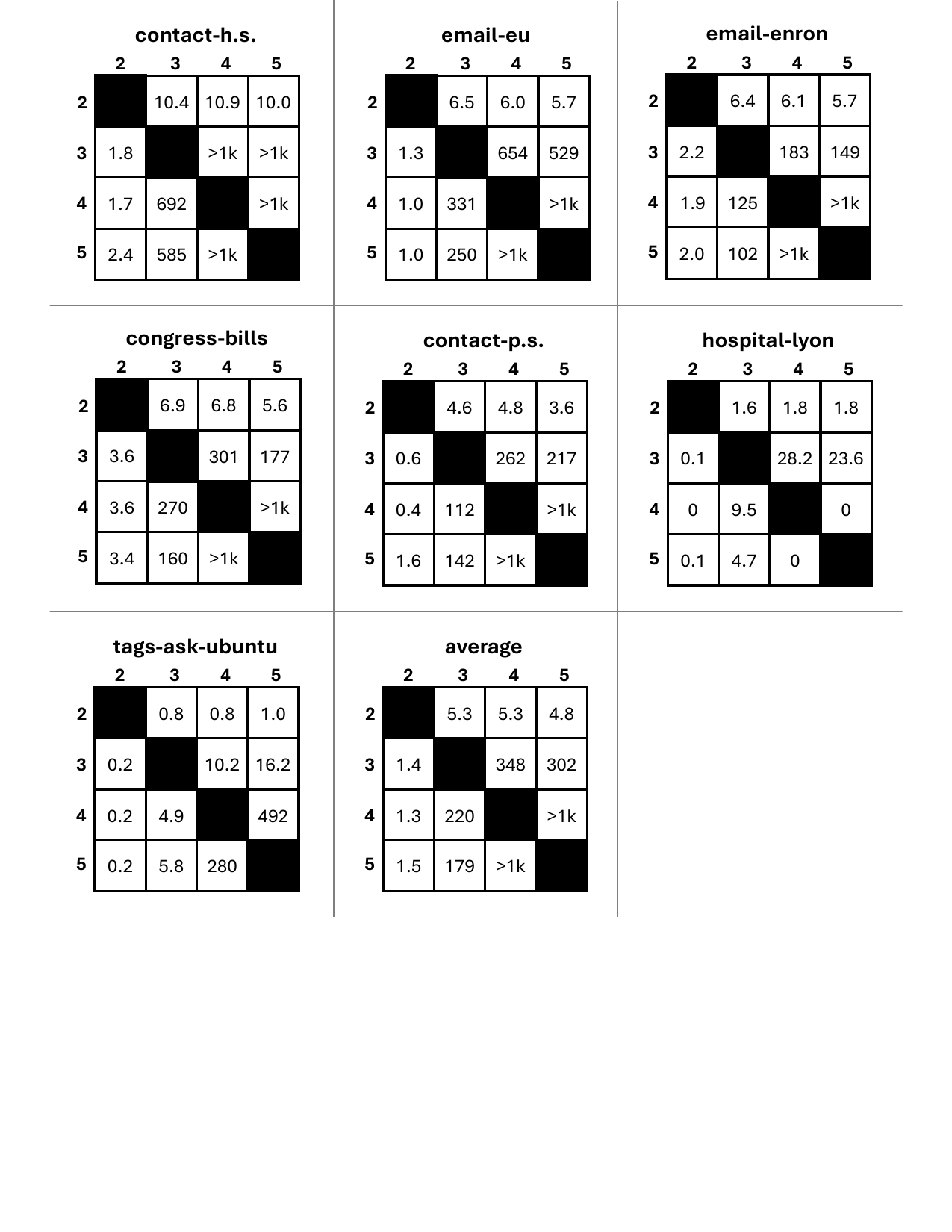}
\]
\caption{The \timematrixname{} of 7 real networks, as well as the cell-wise average matrix. For each graph $G$, only non-empty cells of $\timematrix{G}$ are shown, and cells involving edges of size greater than 5 are omitted. The value of a cell is replaced with ``$>1\mathrm{k}$'' whenever the value is above 1000.}
\label{fig:sm2}
\end{figure}

\begin{figure}[H]
\[
\includegraphics[scale=0.4]{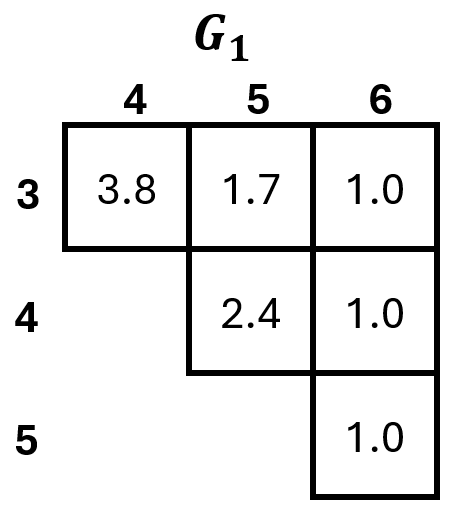}
\]
\caption{The simplicial matrix of $G_1$ from Example~\ref{ex:short-coming 2}, presented in a simplified manner.}
\label{fig:example shorthand}
\end{figure}

\clearpage 

\subsection{Analysis}\label{subsec:analysis}

\subsubsection*{Simplicial ratio}

Based on our results, we see that that biology networks are, on average, more surprisingly simplicial than contact-based networks and email networks. In contrast, it was shown in~\cite{landry2024simpliciality} that contact-based networks are the closest to their simplicial closures and biological networks are furthest from theirs. In fact, comparing the ranks of the 3 existing measures (sf, es, fes) and the ranks from our simplicial ratio (sr), we get the following Kendall correlation values.
\begin{center}
\begin{tabular}{lrrrr}
\toprule
 & sf & es & fes & sr \\
\midrule
sf & 1.000 & 0.706 & 0.989 & -0.270 \\
es & 0.706 & 1.000 & 0.722 & -0.256 \\
fes & 0.989 & 0.722 & 1.000 & -0.289 \\
sr & -0.270 & -0.256 & -0.289 & 1.000 \\
\bottomrule
\end{tabular}
\end{center} 
These values show that our ranking system is negatively correlated with the ranking systems in \cite{landry2024simpliciality}. A partial explanation for this correlation is that (a) the measures behave differently under different regimes of edge density and (b) the 10 datasets cover a wide range of edge density.

\subsubsection*{Bottom-up and top-down simplicial ratios}

In our testing, we find that every temporal graph contains more bottom-up simplicial pairs than top-down simplicial pairs. This suggests that, in general for many real networks, a small edge leading to a larger (superset) edge is more common than a large edge leading to a smaller (subset) edge.
However, this result is \textit{heavily} biased on our choice of keeping only the first instance of an edge. To see this bias, let $G$ be a temporal graph with edges $e_1, e_2 \in E(G)$ such that $e_1 \subset e_2$ and suppose that $e_1$ appears with multiplicity 5 and that $e_2$ appears with multiplicity 1. Then there are 6 possible birth orderings for $e_2$ and the 5 copies of $e_1$, and only 1 such ordering sees $e_2$ born before $e_1$. In most of the temporal networks analysed, the highest frequency of multi-edges are indeed 2-edges, and hence this bottom-up trend is at least partly explained by the above discussion. The topic of temporal simpliciality is one that we intend on exploring further in future works.

\subsubsection*{Simplicial matrix}

Arguably the most immediate take-away from these matrices is that simplicial interactions become less likely as edge size increases. Although this feature is interesting, there is at least a partial explanation for this phenomenon that we explore in the following example. 

\begin{example}\label{ex:size matters}
Let $n \in \N$, $\mathbf{d}$ be a uniform degree sequence, and let $\mathbf{m} = (m_2,\dots,m_5)$ be a sequence of edge sizes with $m_2=m_3=m_4=m_5=n$. Now let $G \sim \mathrm{CL}(n,\mathbf{p},\mathbf{m})$, and let $e_2,e_3,e_4,e_5 \in E(G)$ be chosen uniformly at random conditioned on $|e_i| = i$ for each $i \in \{2,3,4,5\}$. Then, writing $X_{i,j}$ for the indicator variable which is 1 if $e_i \subset e_j$, we have
\[ 
\begin{array}{l|l|l}
\E{X_{2,3}} \propto n^{-2} & \E{X_{2,4}} \propto n^{-2} & \E{X_{2,5}} \propto n^{-2}\\
\hline
& \E{X_{3,4}} \propto n^{-3} & \E{X_{3,5}} \propto n^{-3}\\
\hline
& & \E{X_{4,5}} \propto n^{-4} 
\end{array}
\]
which implies
\[ 
\begin{array}{l|l|l}
\E{\pairs{G,2,3}} \propto 1 & \E{\pairs{G,2,4}} \propto 1 & \E{\pairs{G,2,5}} \propto 1\\
\hline
& \E{\pairs{G,3,4}} \propto 1/n & \E{\pairs{G,3,5}} \propto 1/n\\
\hline
& & \E{\pairs{G,4,5}} \propto 1/n^2
\end{array}
\]
\end{example}
Now, let $H$ be a graph with degree sequence $\mathbf{d}$ and edge-size sequence $\mathbf{m}$, and suppose $H$ has one simplicial pair of each type. Then, based on the above calculations, we get that
\[ 
\begin{array}{l|l|l}
\measure{H,2,3} \propto 1 & \measure{H,2,4} \propto 1 & \measure{H,2,5} \propto 1\\
\hline
& \measure{H,3,4} \propto n & \measure{H,3,5} \propto n\\
\hline
& & \measure{H,4,5} \propto n^2
\end{array}
\]
Thus, the above matrix acts as a loose, point-wise lower-bound on the \matrixname{} for sparse graphs with at least one simplicial pair of each type. For many of the graphs analysed, this rough sketch of a \matrixname{} is a good approximation of the actual matrices. In summary, what the \matrixname{} is capturing, above all else, is that (a) real graphs contain simplicial pairs of all types, and (b) synthetic (sparse) models very rarely generate simplicial pairs other than pairs containing 2-edges. 

\subsubsection*{Temporal simplicial matrix}

In general, the bias towards bottom-up simplicial pairs (and top-down simplicial pairs in the ``tags-ask-ubuntu'' graph) is consistent with the cell-wise comparisons. This suggests that the bias is independent, or at least not heavily dependent, on edge size. 

\section{A new model that incorporates simpliciality}\label{sec:model}

In this section, we define a random graph model, called the \textit{simplicial Chung-Lu model}, that generalizes the Chung-Lu hypergraph model defined in \cite{kaminski2019clustering}. We begin with the algorithm that generates a simplicial edge.

Let $(d_1,\dots,d_n)$ be a degree sequence, $k$ be an edge size, $E$ be a set of existing edges, and $E_k \subseteq E$ be a set of existing edges that are of size $k$. Recalling that $p(\cdot)$ is the probability distribution governed by $(d_1,\dots,d_n)$, writing ${S \choose k}$ for the collection of $k$-subsets of $S$, and recalling that $x \in_u X$ means $x$ is sampled uniformly from $X$, the algorithm to generate a simplicial edge is as detailed in Algorithm \ref{alg:simplicial edge}. 

\begin{algorithm}[ht]
\caption{Simplicial edge. }\label{alg:simplicial edge}
\begin{algorithmic}[1]
\Require $(d_1,\dots,d_n)$, $k$, $E$.
\If{ $E \setminus E_k = \emptyset$}
\State Sample $e \sim \mathbf{Algorithm~1}\Big( (d_1,\dots,d_n), k \Big)$
\Else
\State Sample $e' \in_u E \setminus E_k$.
\If{ $|e'| < k$}
\State Sample $e'' \sim \mathbf{Algorithm~1}\Big( (d_1,\dots,d_n), k-|e'| \Big)$.
\State Set $e = e' \cup e''$
\Else
\State Sample $e \in_u {e' \choose k}$
\EndIf
\EndIf
\State Return $e$
\end{algorithmic}
\end{algorithm}

\noindent
In words, we first check if there is at least one edge in $E$ \textit{not} of size $k$ to pair $e$ with. If there is no such edge, we return a Chung-Lu edge. Otherwise, we choose an existing edge $e'$ uniformly at random from the set of edges \textit{not} of size $k$ and construct our edge $e$ from $e'$ in one of two ways: if $k<|e'|$ we set $e$ to be a uniform $k$-subset of $e'$, whereas if $k>|e'|$ we build $e$ by combining $e'$ with a Chung-Lu edge of size $k-|e'|$. 

In Algorithm \ref{alg:simplicial}, we deescribe how to generate a simplicial Chung-Lu graph. Let $(d_1,\dots,d_n)$ be a degree sequence, $(m_{k_{\mathrm{min}}},\dots,m_{k_{\max}})$ be a sequence of edge sizes, and $S = (s_1, \dots, s_\ell)$ be a random permutation of all available sizes for an edge, i.e., $S$ contains $m_k$ copies of $k$ for each edge size $k$ in some random order. Additionally, let $q \in [0,1]$ be a parameter governing the number of simplicial edges created during the process. 

\begin{algorithm}[ht]
\caption{Simplicial Chung Lu model.}\label{alg:simplicial}
\begin{algorithmic}[1]
\Require $(d_1,\dots,d_n)$, $(m_{k_{\mathrm{min}}},\dots,m_{k_{\mathrm{max}}})$, $q$.
\State Initialize edge list $E = \{\}$ and random edge-size list $S$.
\For{ $k \in S$}
\State Sample $X \sim \mathrm{Bernoulli}(q)$.    
\If{ $X = 1$}
\State Sample $e \sim \mathbf{Algorithm~3}\Big( (d_1,\dots,d_n), k, E \Big)$
\Else
\State Sample $e \sim \mathbf{Algorithm~1}\Big( (d_1,\dots,d_n), k \Big)$
\EndIf
\State Set $E = E \cup \{e\}$.
\EndFor 
\State Return $G = ([n], E)$.
\end{algorithmic}
\end{algorithm}

Note that, if $q=0$, the simplicial Chung-Lu model yields a Chung-Lu model, ensuring that this new model is indeed a generalized Chung-Lu model. Moreover, the following lemma shows that the main feature of the Chung-Lu model is still present in this new model. 

\begin{lemma}
Let $G$ be a random graph generated as a simplicial Chung-Lu model with input parameters $(d_1,\dots,d_n)$, $(m_{k_{\mathrm{min}}},\dots, m_{k_{\mathrm{max}}})$, and $q \in [0,1]$. Then, for all $v \in [n]$,
\[
\E{\mathrm{deg}_G(v)} = d_v \,.
\]
\end{lemma}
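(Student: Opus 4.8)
The plan is to show that the expected degree of each vertex $v$ is preserved by verifying that, regardless of which edge-generation subroutine is invoked, each slot of each edge contributes $p(v)$ in expectation to the count of $v$, so that summing over all slots recovers $d_v$ exactly as in the proof of the Chung-Lu lemma. Concretely, for a fixed vertex $v$, I would write $\deg_G(v) = \sum_{k \in S} N_k(v)$, where $N_k(v)$ is the number of instances of $v$ in the edge $e$ generated during the iteration with edge-size $k$, and then argue that $\E{N_k(v)} = k \cdot p(v)$ for every such iteration. Since $S$ contains $m_j$ copies of $j$ for each $j$, this gives $\E{\deg_G(v)} = \sum_j m_j \cdot j \cdot p(v) = p(v) \sum_{u} d_u = d_v$ by the handshaking identity, exactly mirroring the earlier lemma.

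The work is in establishing $\E{N_k(v)} = k\,p(v)$ for an edge produced by the simplicial subroutine (Algorithm~\ref{alg:simplicial edge}); the Bernoulli$(q)$ coin and the case $X=0$ reduce immediately to Algorithm~\ref{alg:chung lu edge}, for which the claim is a one-line computation. For the simplicial edge, I would condition on the history, i.e.\ on the current edge list $E$ (and hence on $E_k$). If $E \setminus E_k = \emptyset$, the edge is a plain Chung-Lu edge of size $k$ and contributes $k\,p(v)$. Otherwise, we pick $e' \in_u E \setminus E_k$; conditioning further on $e'$, there are two subcases. If $|e'| > k$, then $e$ is a uniformly random $k$-subset of $e'$, so $\E{N_k(v) \mid e'} = k \cdot \frac{m_{e'}(v)}{|e'|}$ — here the key point is that $e'$ is itself an edge that was generated earlier by the \emph{same} process, so inductively (on the number of edges already placed) $\E{m_{e'}(v)} = |e'| \cdot p(v)$ once we also average over the randomness that produced $e'$, giving $\E{N_k(v)} = k\,p(v)$. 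If $|e'| < k$, then $e = e' \cup e''$ where $e''$ is a Chung-Lu edge of size $k - |e'|$; here I would note that the multiplicity of $v$ in $e$ equals $m_{e'}(v) + m_{e''}(v)$ (the model produces multigraph edges, so a union here should be read as a multiset union — I'd flag this), and then $\E{N_k(v) \mid e'} = m_{e'}(v) + (k-|e'|)\,p(v)$, whose expectation over $e'$ is $|e'|p(v) + (k-|e'|)p(v) = k\,p(v)$ by the inductive hypothesis applied to the earlier edge $e'$.

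The clean way to package this is a downward induction on the order in which edges are appended to $E$: the statement to induct on is ``every edge $e$ placed in $E$ satisfies $\E{m_e(v)} = |e|\,p(v)$ for all $v$, where the expectation is over all randomness up to and including the placement of $e$.'' The base case is the first edge, which is always a Chung-Lu edge. The inductive step is exactly the case analysis above, using that $e'$ is always chosen from among \emph{previously placed} edges and that $e''$ is an independent Chung-Lu edge. Summing the per-edge identity over the $m_j$ edges of each size $j$ and over all $j$, and invoking the handshaking identity $\sum_j j\,m_j = \sum_u d_u$ together with $p(v) = d_v / \sum_u d_u$, yields $\E{\deg_G(v)} = d_v$.

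The main obstacle I anticipate is bookkeeping rather than any genuine difficulty: one must be careful that the conditioning is set up so that $e'$ is measurable with respect to the history (it is, since it is selected from the current $E$), that the Chung-Lu component $e''$ is genuinely independent of that history, and — most importantly — that ``$e' \cup e''$'' and ``$k$-subset of $e'$'' are interpreted as multiset operations consistent with the multigraph conventions in Section~\ref{subsec:notation}, since otherwise the multiplicities $m_e(v)$ do not add up linearly and the induction breaks. Once the multiset reading is fixed, every step is a routine expectation computation.
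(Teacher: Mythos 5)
Your proposal is correct and follows essentially the same route as the paper: an induction on the order in which edges are placed, showing that every vertex slot of every edge has marginal distribution $p(\cdot)$ (equivalently, your $\E{m_e(v)} = |e|\,p(v)$), with the same three-way case analysis (fresh Chung-Lu edge, uniform subset of an earlier edge, earlier edge augmented by an independent Chung-Lu edge), concluded by the handshaking identity. The paper states the inductive claim at the level of individual vertex positions rather than counts, but the two formulations are equivalent and your multiset bookkeeping matches the paper's multigraph conventions.
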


\begin{proof}
Let us generate a random edge-size list $S$ that will be used to create the simplicial Chung-Lu graph $G$. We will first prove (by induction on $i$) the following claim. 

\noindent \textbf{Claim}: Each vertex $v$ of the $i$'th edge $e_i$ formed during the construction process of $G$ satisfies
$$
\p{v = u} = p(u) \text{ for all } u \in [n].
$$
Note that edges of $G$ are not generated independently; the graph has rich dependence structure. The distribution of $e_i$ is affected by edges generated earlier. It is important to keep in mind that the claim applies to the edge formed at time $i$ but without exposing information about earlier edges.

Firstly, if $i=1$, then $e_1$ is necessarily generated via Algorithm~\ref{alg:chung lu edge} and the claim follows immediately. Now fix $i>1$ and consider the formation of $e_i$. On the one hand, if $e_i$ was generated via {Algorithm~\ref{alg:chung lu edge}} then the claim is once again immediate. Otherwise, $e_i$ was generated via {Algorithm~\ref{alg:simplicial edge}}, i.e., generated constructively from another edge $e_j$ with $j<i$. In this case, if $|e_i|<|e_j|$ then $e_i \in_u {e_j \choose |e_i|}$ and, regardless which subset of $e_j$ is selected to form $e_i$, the claim holds by induction. Otherwise, if $|e_i|>|e_j|$, then $e_i$ is the union of $e_j$ and another edge $e''$ generated via {Algorithm~\ref{alg:chung lu edge}}: the claim holds immediately for vertices in $e''$, and for vertices in $e_j$, the claim holds by induction. 

Thus, for any $e \in E(G)$, $v \in e$, and $u \in [n]$, we have that $\p{v = u} = p(u)$. Summing over all vertices in all edges, we get that 
\[
\E{\mathrm{deg}_G(u)} 
= 
\left(\sum_{e \in E(G)} \sum_{v \in e} \p{v = u}\right)
=
\left(p(u) \sum_{e \in E(G)} |e| \right)
=
\left(p(u) \sum_{v \in [n]} d_v\right)
= 
d_u \,,
\] 
the first equality following from linearity of expectation, and the third equality following from the generalized handshaking lemma. 
\end{proof}

The simplicial Chung Lu model does in fact generate more simplicial pairs as $q$ increases. Figure~\ref{fig:varying q} shows the expected number of simplicial pairs (approximated via 1000 samples) for graphs generated via {Algorithm~\ref{alg:simplicial}} with $q$ varying from $0$ to $1$ in $0.1$ increments.

\begin{figure}[ht]
    \[
    \includegraphics[scale=0.45]{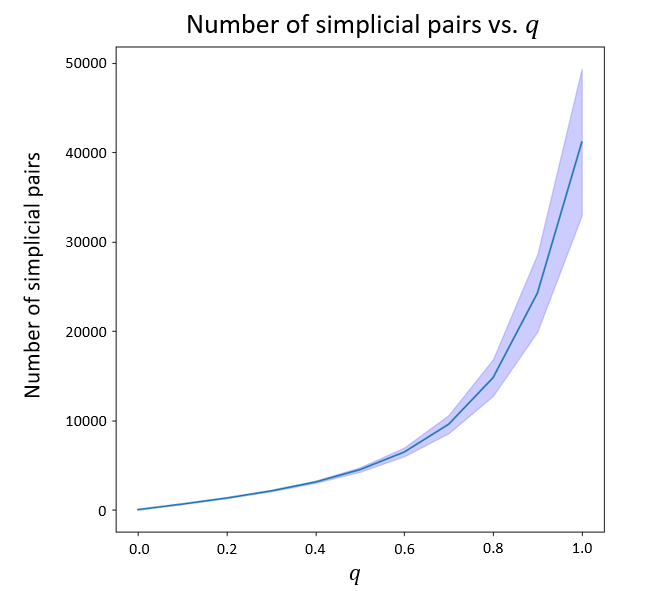}
    \hspace{0.2in}
    \]
    \caption{The average number of simplicial pairs (taken over 1000 samples) for simplicial Chung Lu graphs with varying $q$. For each $q \in [0, 0.1, \dots, 1]$, $G_q$ is a simplicial Chung Lu graph with $n = 1000$, $\mathbf{d}$ a uniform degree sequence, and $[|E_2|, |E_3|, |E_4|, |E_5|] = [5000, 1000, 100, 10]$. The shaded region represents the standard deviation over the 1000 samples.}
    \label{fig:varying q}
\end{figure}


\section{Experiments}\label{sec:experiments}

One reason to study simpliciality is that it likely has an impact on the evolution of stochastic processes on the associated graphs. We illustrate this potential impact via two toy processes with varying parameters. The first process is component growth which a standard way to measure the robustness of a network (see, for example, Chapter~8 in \cite{barabasi2016network}). The second type is information diffusion which simulates how quickly a substance (e.g., a disease, a rumour) spreads through a network. Intuitively, both of these processes should be affected by a graph containing a large number of simplicial pairs: in the case of component growth the smaller edge of a simplicial pair does not contribute to component size, and in the case of information diffusion a simplicial pair transfers information less efficiently than two non-overlapping edges.

\subsection{Descriptions of the experiments}\label{subsec:descriptions}

We perform four experiments (two experiments for each of the two types of stochastic processes) on the real networks and on the corresponding simplicial Chung-Lu graphs for varying $q \in \{0, 0.5, 1\}$. 

\bigskip
\noindent
\textbf{Giant component growth with random edge selection:} We choose a uniform random order for $E(G)$ and track the size of the largest component as edges are added to $G$ according to a random ordering. We plot the growth up to the point where $\min\{|E(G)|,|V(G)|\}$ edges have been added. We perform this experiment independently 10000 times on the real graphs, meaning we shuffle the edge ordering and track the growth 10000 times. For the simplicial Chung-Lu models we (a) sample the graph, (b) shuffle the edges, and (c) track the growth, performing steps (a), (b), and (c) independently 10000 times. 

\bigskip
\noindent
\textbf{Giant component growth with adversarial edge selection:} We order $E(G)$ in ascending order of betweenness (breaking ties randomly) and track the size of the largest component as edges are added to $G$ according to this adversarial ordering. Note that the betweenness of an edge $e$ in a hypergraph is equivalent to the betweenness of its corresponding vertex $v_e$ in the line graph (see~\cite{freeman1977set}, or any textbook on network science such as~\cite{kaminski2021mining}, for a definition of betweenness for graphs). For the real graphs, we run the experiment only once (the results will be the same every time), and for the Chung-Lu models we sample and track growth 20 times. We sample significantly less here than in the other three experiments due to the time complexity of calculating betweenness. 

\bigskip
\noindent
\textbf{Information diffusion from a single source:} We initialize a function $w_0:V(G) \to [0,1]$ with $w_0(v) = 0$ for all vertices, except for one randomly chosen vertex $v^*$ which has $w_0(v^*) = 1$. Then, in round $i+1$, we choose a random edge $e$ and, for each $v \in e$, set $w_{i+1}(v) = w(e)/|e|$, where $w(e) = \sum_{u \in e} w(u)$ (keeping $w_{i+1}(v) = w_{i}(v)$ for all $v \notin e$). We track the Wasserstein-1 distance (also known as the ``earth mover's distance'' \cite{710701}) between $w_i$ and the uniform distribution $w_\infty: V(G) \to 1/|V(G)|$. We run the experiment 10000 times, re-rolling the Chung-Lu model every time. 

\bigskip
\noindent
\textbf{Information diffusion from 10\% of the vertices:} This experiment is identical to the previous experiment, except that $w_0(v^*) = 1$ for 10\% of the vertices chosen at random, and that $w_\infty: V(G) \to 1/10$. 

\subsubsection*{Insisting on connected graphs}

These experiments, and in particular the two diffusion experiments, are highly dependent on connectivity. The real graphs are restricted to their largest component, and so we insist that the random graphs are also connected. To achieve this, we modify the simplicial Chung-Lu model and insist that incoming edges must connect disjoint components, until the point the graph is connected when we continue generating edges as normal. A full description of this algorithm is presented in Appendix~\ref{app: algs}. 

\subsection{The results}\label{subsec:experiment results}

Here, we will show the results for the two graphs: \textbf{hospital-lyon} and \textbf{disgenenet}. Recall that the \textbf{hospital-lyon} graph has a simplicial ratio of approximately 0.97, whereas the \textbf{disgenenet} graph has a ratio of approximately 15.99. The full collection of results can be found in Appendix~\ref{app: experiments} and the sampling technique can be found in Appendix~\ref{app: algs}.

\subsubsection*{Experiment 1: random growth}

In this first experiment shown in Figure \ref{fig:exp1}, we see the following. For \textbf{hospital-lyon} the real graph grows in a near identical way to the random model with $q=0$ and $q=0.5$, whereas the random model with $q=1$ grows much slower. In contrast, for \textbf{disgenenet} the real graph grows most similarly to the random model with $q=1$ whereas the random model with $q=0.5$ grows slightly faster, and for $q=0$ even faster still. Of course, these graphs have very different growth behaviour due to the difference in edge densities. Nevertheless, this result suggests that the high simplicial ratio of \textbf{disgenenet} plays a role in slowing down the growth of the graph, whereas the low simplicial ratio of \textbf{hospital-lyon} leads it to grow as quickly as a classical Chung-Lu model. 

\begin{figure}[ht]
\[
\includegraphics[scale=0.7,trim = {5cm 1cm 5cm 0cm}]{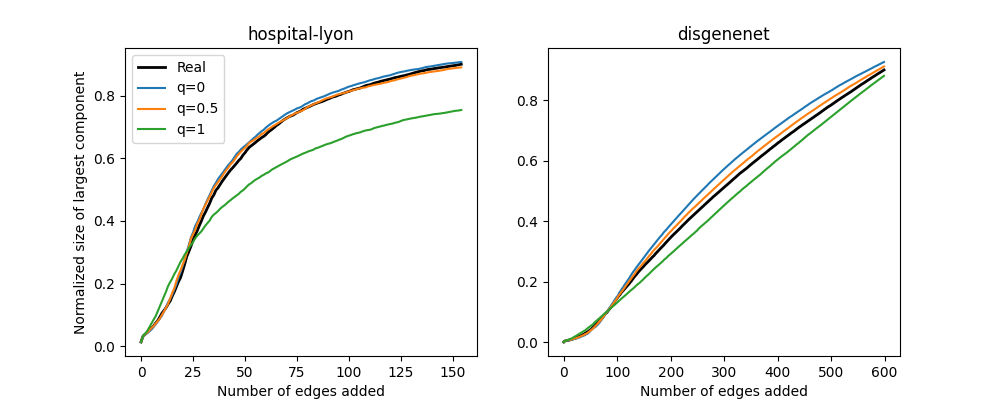}
\]
\caption{Giant component size (normalized by the number of vertices) vs.\ number of edges added in the random growth process on the \textbf{hospital-lyon} graph (left) and the \textbf{disgenenet} graph (right). The curve is the point-wise average across 10000 independent experiments: for the real graph the edges are resampled each time, and for the random models the entire graphs are resampled each time.}
\label{fig:exp1}
\end{figure}

\subsubsection*{Experiment 2: adversarial growth}

The results of this second experiment shown in Figure \ref{fig:exp2}, adversarial growth, are less clear due to the fact that we averaged over 20 samples instead of 10000. Nonetheless, there is still a clear distinction between the real growth vs.\ the synthetic growth for these two graphs. On the left, we see that the real graph grows faster than all the random models, whereas on the right the real graph grows slower than in the $q=0$ and $q=0.5$ random models. 

\vspace{-.1cm}

\begin{figure}[ht]
\[
\includegraphics[scale=0.7,trim = {5cm 1cm 5cm 0cm}]{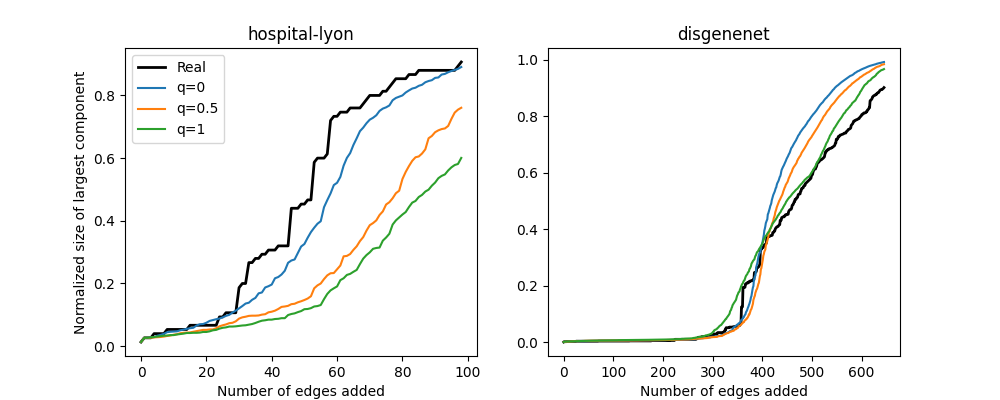}
\]
\caption{Giant component size vs.\ number of edges added in the adversarial growth process on the \textbf{hospital-lyon} graph (left) and the \textbf{disgenenet} graph (right). The curve is the point-wise average across 20 independent experiments: for the real graph the experiment is performed only once as the result will always be the same, and for the random models the graphs are resampled each time.}
\label{fig:exp2}
\end{figure}

\subsubsection*{Experiment 3: single-source diffusion}

This experiment, shown in Figure \ref{fig:exp3}, is perhaps the most substantial in showing the effect of simpliciality on a random process, namely, that information diffusion is slower on highly simplicial graphs vs. non-simplicial graphs. We note, however, that the diffusion process on \textbf{hospital-lyon} is still slower than that of a random model with $q=0.5$. Surely there are more features of this real graph not captured by random models that contribute to the slower diffusion time. 

\vspace{.35cm}

\begin{figure}[ht]
\[
\includegraphics[scale=0.75,trim = {5cm 1cm 5cm 1cm}]{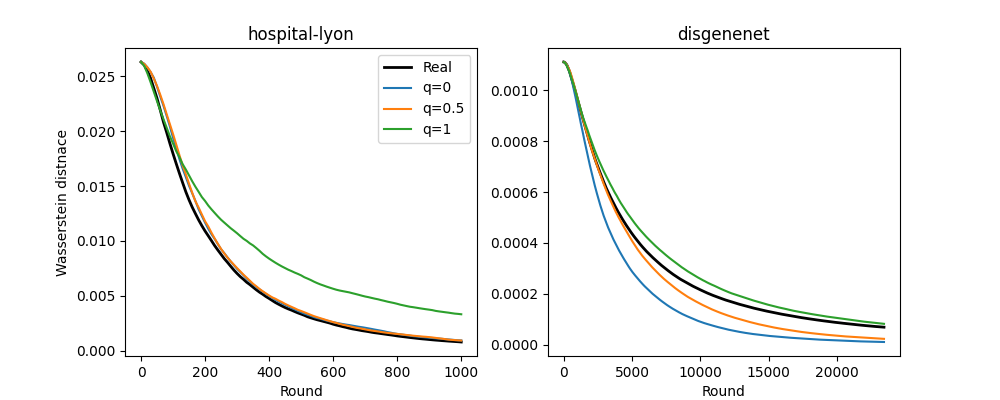}
\]
\caption{Wasserstein distance to uniform vs.\ number of rounds in the single-source diffusion process on the \textbf{hospital-lyon} graph (left) and the \textbf{disgenenet} graph (right). The curve is the point-wise average across 10000 independent experiments: for the real graph the chosen edges per round, as well as the location of the initial vertex with weight 1, are resampled each time, and for the random models the entire graphs are resampled each time.}
\label{fig:exp3}
\end{figure} 

\subsubsection*{Experiment 4: 10\% diffusion}

The result shown in Figure \ref{fig:exp4} mirrors the result in the previous experiment, except of course that the diffusion is much faster. 

\bigskip 

\begin{figure}[ht]
\[
\includegraphics[scale=0.75,trim = {5cm 1cm 5cm 1cm}]{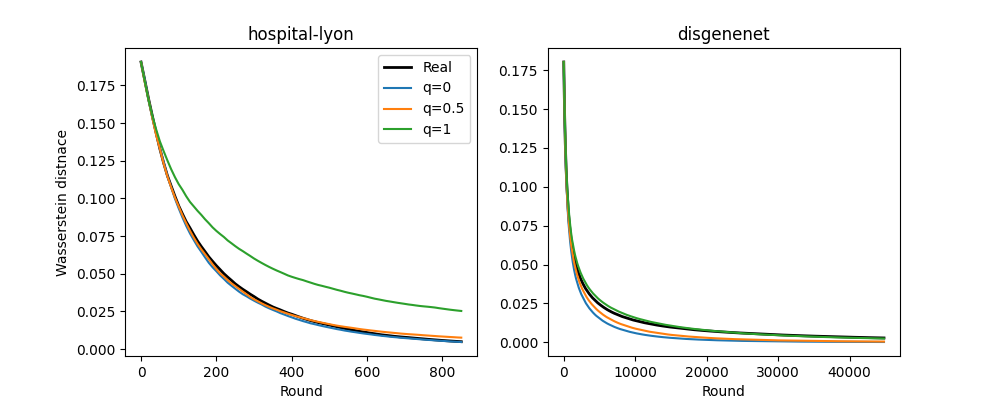}
\]
\caption{Wasserstein distance to uniform vs.\ number of rounds in the 10\% sprinkled diffusion process on the \textbf{hospital-lyon} graph (left) and the \textbf{disgenenet} graph (right). The curve is the point-wise average across 10000 independent experiments: for the real graph the chosen edges per round, as well as the location of the initial 10\% of vertices with weight 1, are resampled each time, and for the random models the entire graphs are resampled each time.}
\label{fig:exp4}
\end{figure}

\section{Conclusion}\label{sec:conclusion}

The phenomenon of edges inside of other edges is a feature of hypergraphs not present in graphs and, based on our results and on the preceding results of Landry, Young and Eikmeier, it is clear that this phenomenon is a key feature of real-world networks with multi-way interactions. The \measurename{} captures the strength of simplicial interactions in a graph and, from the collection of 10 real-world networks analysed, we have showed that (a) the \measurename{} is not at all consistent across the graphs, (b) the \measurename{} varies significantly even for graphs of a similar type (e.g., \textbf{contact high-school}, \textbf{contact primary-school}, and \textbf{hospital-lyon}), (c) the number of simplicial interactions involving edges of size $k, \ell > 2$ is not at all captured by the Chung Lu model, and (d) the \measurename{} can affect the outcome of random growth, adversarial growth, and information diffusion. We hope that our work continues to motivate research into the phenomenon of edges inside edges, and we discuss some potential follow ups to this research.

\subsection{Further research}\label{subsec:further research}

The \measurename{} involves the parameter $\E{\pairs{\hat{G}}}$ where $\hat{G} \sim \mathrm{CL}(G)$. Instead of approximating $\E{\pairs{\hat{G}}}$ as we do, one could compute $\E{\pairs{\hat{G}}}$ explicitly. For example, given a uniform degree sequence $\mathbf{d}$ and edge-size sequence $(m_{k_{\mathrm{min}}}, \dots, m_{k_{\mathrm{max}}})$, and conditioning on $\hat{G}$ containing no multiset edges, the probability that $e_1, e_2$ form a simplicial pair is
\[
{|e_2| \choose |e_1|} \Big/ {n \choose |e_1|} \,.
\]
Thus, by linearity of expectation, conditioning on the event that $\hat{G}$ has no multiset edges, we have 
\[
\E{\pairs{\hat{G}}} = \sum_{k=k_{\mathrm{min}}}^{k_{\mathrm{max}}-1} \sum_{\ell = k+1}^{k_{\mathrm{max}}} m_k m_\ell {\ell \choose k} \Big/ {n \choose k} \,.
\]
Thus, for a uniform degree sequence, $\E{\pairs{\hat{G}}}$ is relatively straightforward to compute. However, trying to compute this expectation if the degree sequence is not uniform is significantly harder. Finding a closed form for this expectation, or even a closed form approximation, would allow for a significantly faster algorithm for computing $\measure{G}$. Such a result would also allow for a better understanding of the nature of the simplicial matrix for both sparse and dense graphs. 

Understanding the degree to which edges form simplicial pairs could aid in predicting the composition of future edges, especially large edges, in temporal networks. If a graph has a high simplicial ratio, then a potential new edge should be given more weight based on the number of new simplicial pairs it creates, as well as on the size of the smaller edge in each pairs. For example, when considering the location for a new edge of size $5$ in a highly simplicial graph $G$, a location that creates many $(2,5)$ pairs should be given more weight, but perhaps a location that creates a single $(4,5)$ pair should be given \textit{even more} weight. In any case, incorporating simpliciality in the link prediction problem should improve existing algorithms, at least for highly simplicial graphs. 

Along with the \measurename{} and \matrixname{}, we introduce temporal variants. In our experiments where only the first instance of an edge is kept in a temporal network, we find that, typically, more bottom-up pairs are generated than top-down pairs, in part because there are more small multi-edges than large multi-edges. There are of course other ways to measure the difference in frequency between bottom-up pairs and top-down pairs. For example, we could insist that a simplicial pair $e_k, e_\ell$ is ``temporally relevant'' if and only if both $e_k$ and $e_\ell$ were born within the same $\epsilon$-window of time. In this case, we could measure the frequency of $e_k$ pairs followed shortly by $e_\ell$ pairs, and vice versa. The temporal formation of simplicial pairs could once again be valuable for the task of link prediction. 

\clearpage

\bibliography{biblio}

\appendix


\section{All experiments} \label{app: experiments}

Here we show the results of the random growth, single-source diffusion, and 10\% diffusion experiments. Due to the time complexity of computing edge-betweenness, we are unable to perform the adversarial growth experiment for all 10 graphs. Note that \textbf{ubuntu (edge-chopped)} is the subgraph of \textbf{tags-ask-ubuntu} containing only the first 20000 edges. The simplicial ratio of this edge-chopped graph is $\approx 0.37$ and so this subgraph is even less simplicial than the whole graph.  

The experiments are presented in the the following order: random growth, single-source diffusion, and 10\% diffusion. Each of the three figures are presented on two separate pages. 



\newpage
\[
\includegraphics[scale=0.7,trim = {5cm 1cm 5cm 1cm}]{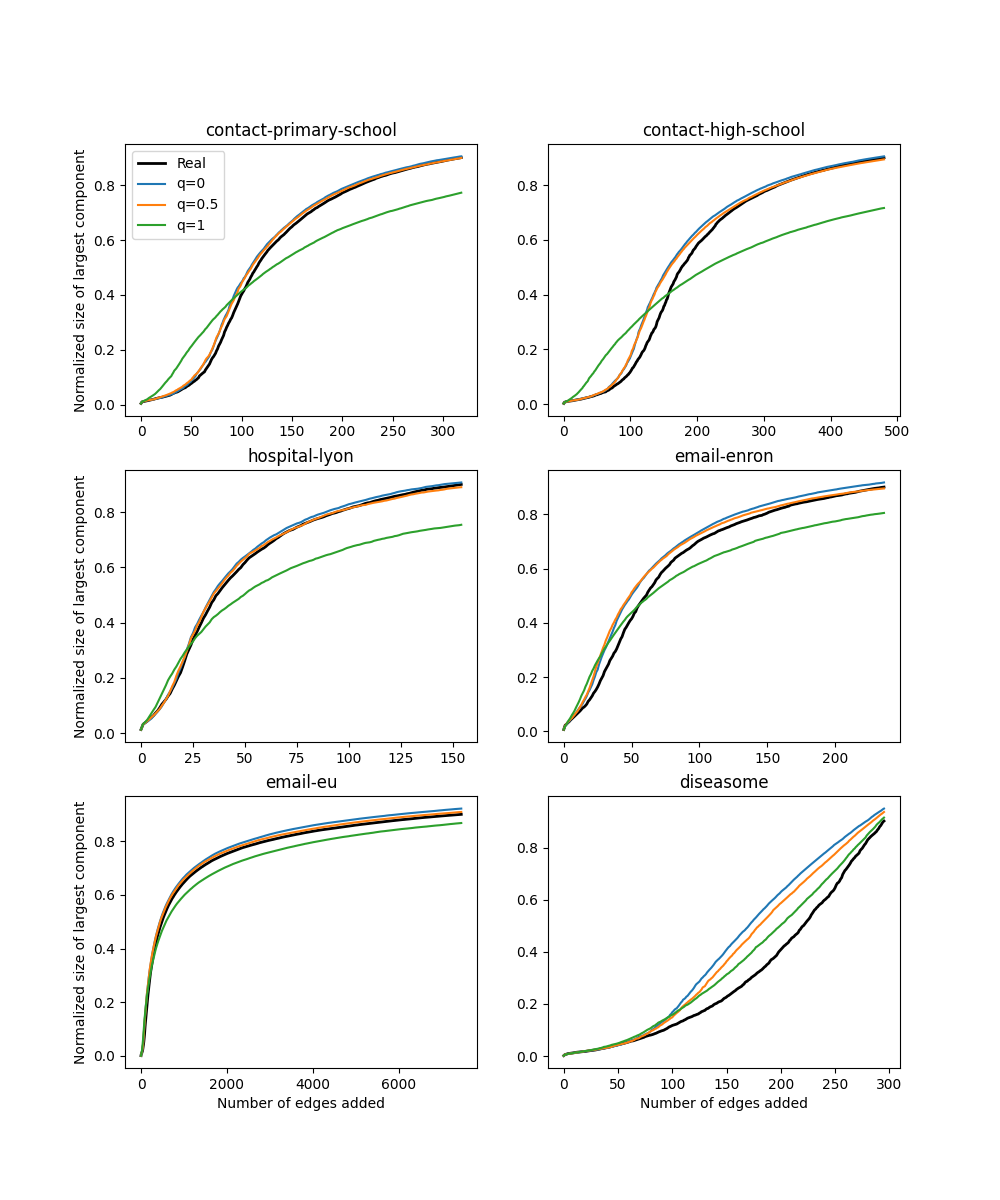}
\]
\newpage
\begin{figure}[H]
\[
\includegraphics[scale=0.7,trim = {5cm 1cm 5cm 1cm}]{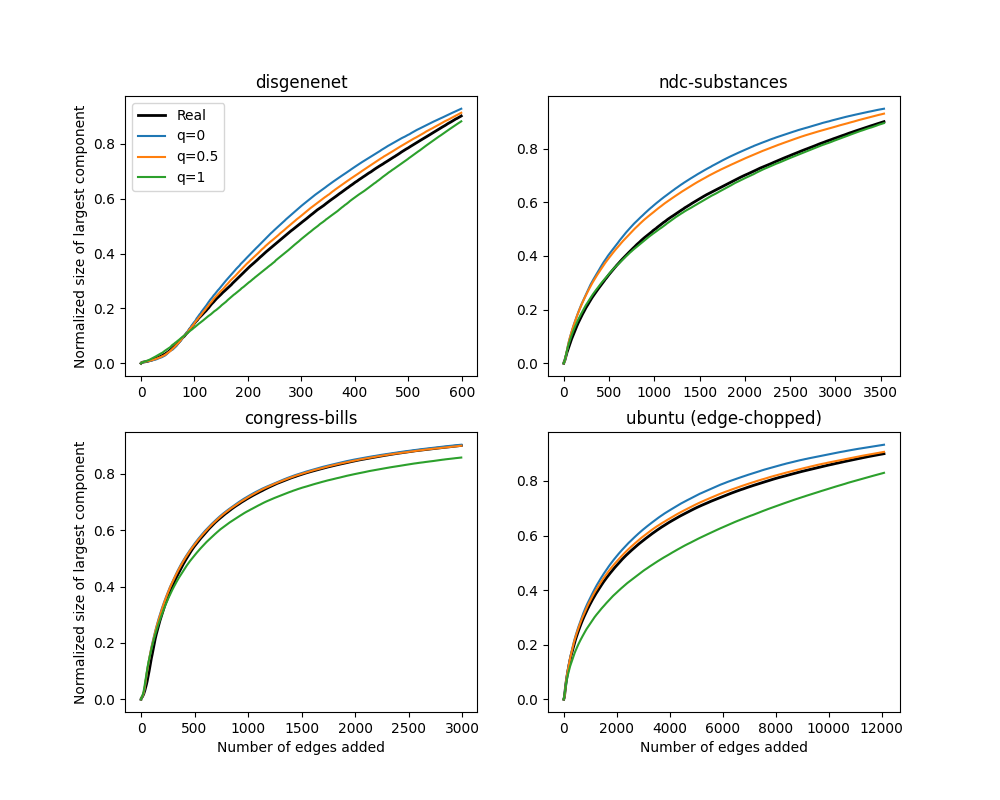}
\]
\caption{Giant component size (normalized by the number of vertices) vs.\ number of edges added in the random growth process for all 10 graphs. The curve is the point-wise average across 10000 independent experiments: for the real graph the edges are resampled each time, and for the random models the entire graphs are resampled each time.}
\label{fig:all random growth}
\end{figure}

\newpage



\[
\includegraphics[scale=0.7,trim = {5cm 1cm 5cm 1cm}]{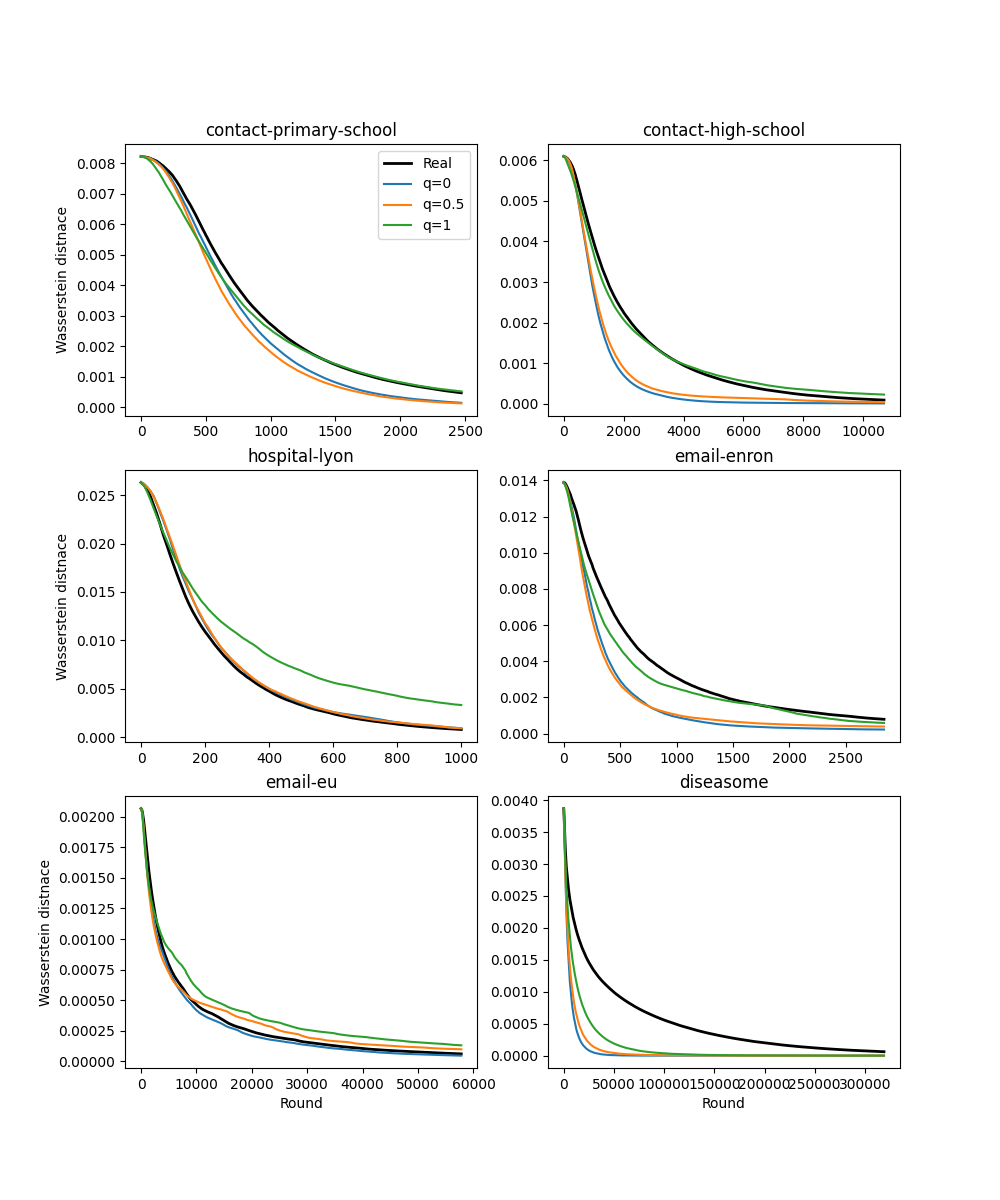}
\]
\newpage

\begin{figure}[H]
\[
\includegraphics[scale=0.7,trim = {5cm 1cm 5cm 1cm}]{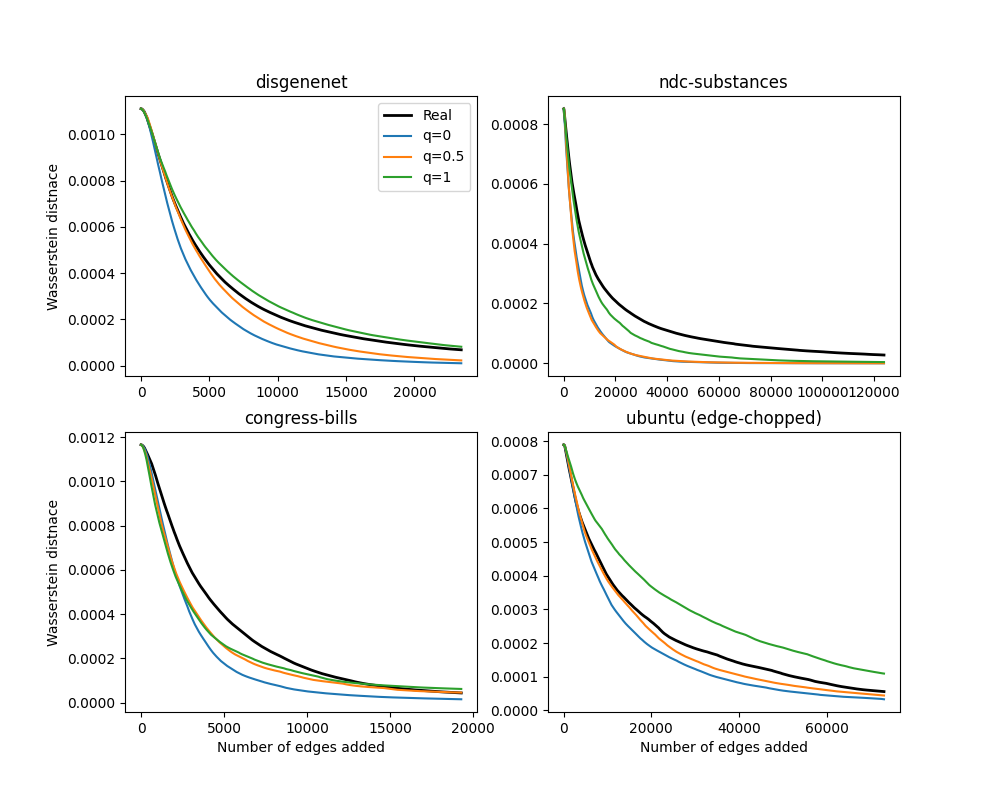}
\]
\caption{Wasserstein distance to uniform vs.\ number of rounds in the single-source diffusion process for all 10 graphs. The curve is the point-wise average across 10000 independet experiments: for the real graph the chosen edges per round, as well as the location of the initial vertex with weight 1, are resampled each time, and for the random models the entire graphs are resampled each time.}
\label{fig:all ss diffusion}
\end{figure}

\newpage

\[
\includegraphics[scale=0.7,trim = {5cm 1cm 5cm 1cm}]{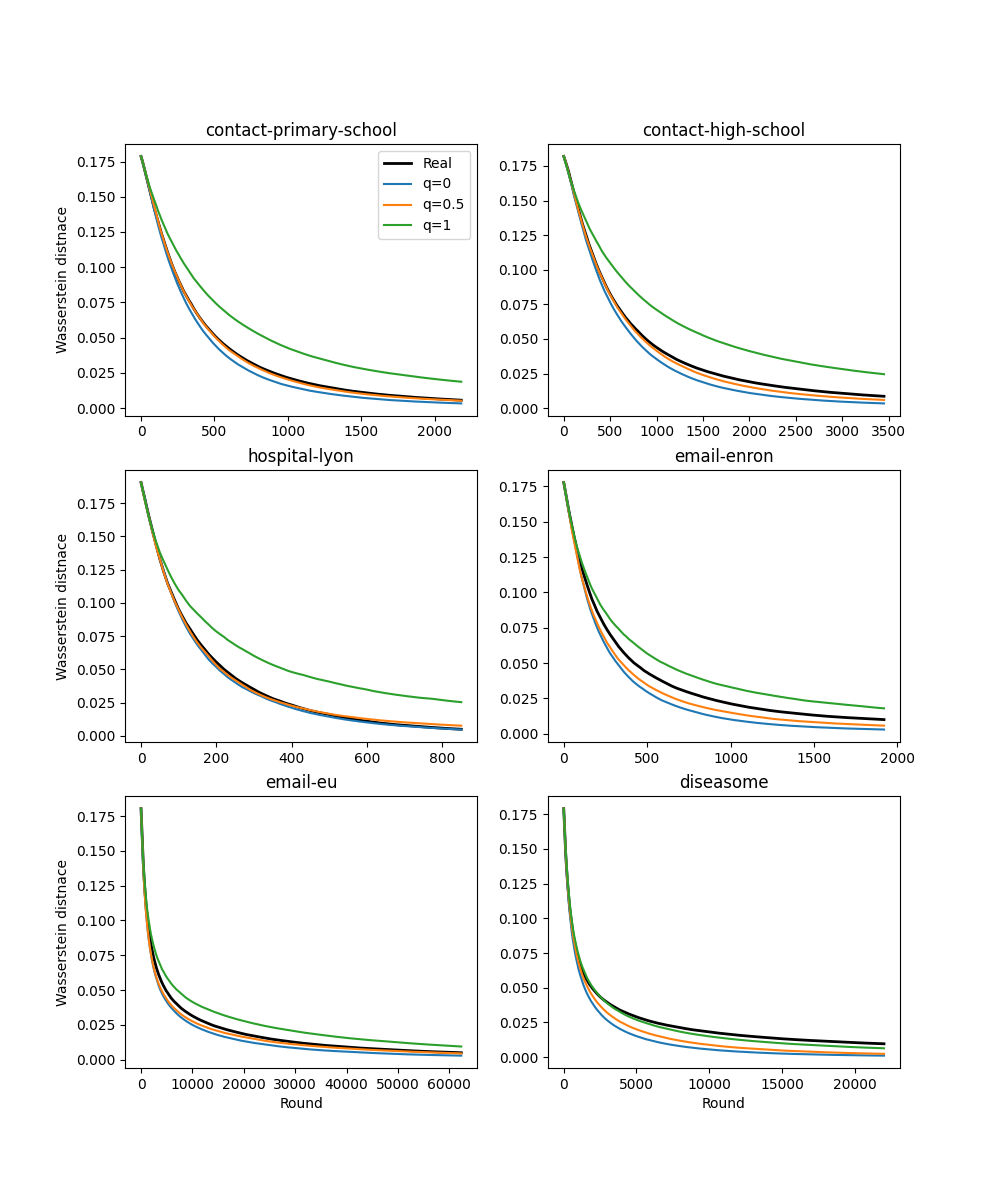}
\]
\newpage
\begin{figure}[H]
\[
\includegraphics[scale=0.7,trim = {5cm 1cm 5cm 1cm}]{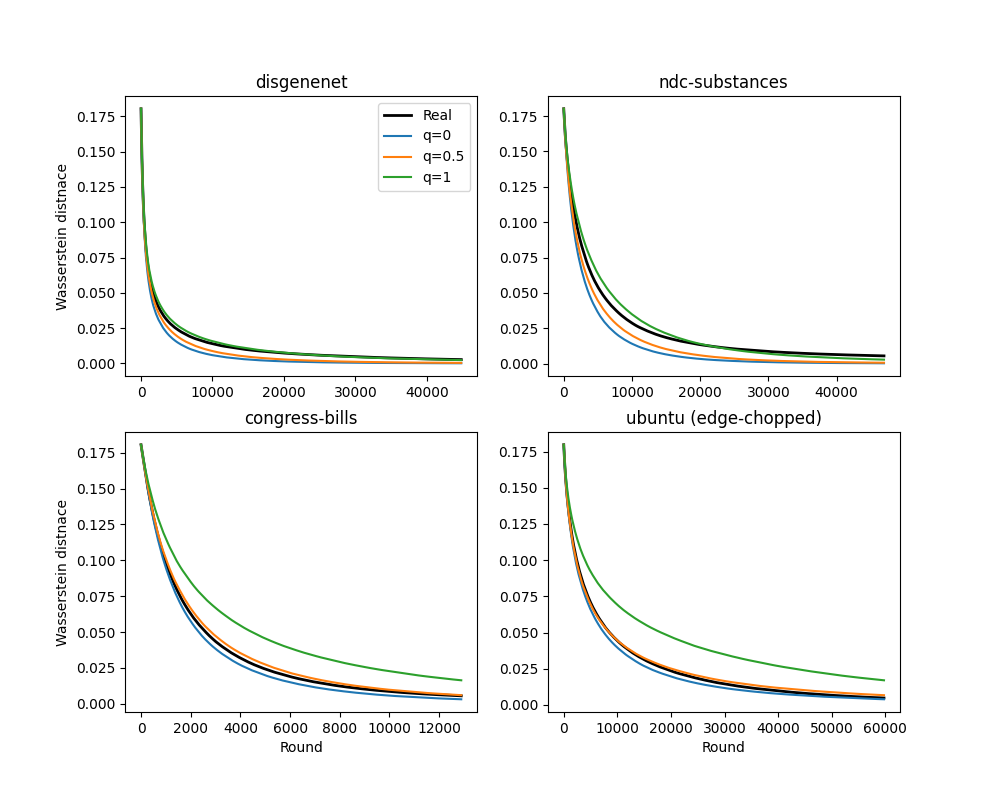}
\]
\caption{Wasserstein distance to uniform vs.\ number of rounds in the 10\% sprinkled diffusion process for all 10 graphs. The curve is the point-wise average across 10000 independent experiments: for the real graph the chosen edges per round, as well as the location of the initial 10\% of vertices with weight 1, are resampled each time, and for the random models the entire graphs are resampled each time.}
\label{fig:all 10p diffusion}
\end{figure}

\section{Algorithms} \label{app: algs}

\subsection{Estimating the expected number of simplicial pairs}\label{subapp: approx alg}

To compute the simplicial ratio of a graph $G$, we must first compute the expected number of simplicial pairs in $\hat{G} \sim \mathrm{CL}(G)$. As discussed in Section~\ref{sec:conclusion}, computing this expectation is quite difficult. In this section, we outline a Monte Carlo approximate technique for this expectation.

For a degree sequence $\mathbf{d} = (d_1,\dots d_n)$ and an edge size $k$, write $\Cprob{\text{simple}}{\mathbf{d}, k}$ for the probability that {Algorithm~\ref{alg:chung lu edge}} generates a simple edge when given inputs $\mathbf{d}$ and $k$. For a graph $G$ with degree sequence $\mathbf{d}$, we first approximate $\Cprob{\text{simple}}{\mathbf{d}, k}$ for all edge sizes $k$ in $E(G)$. To do this, we chose a number of samples $s$, sample $s$ edges independently as Algorithm~\ref{alg:chung lu edge} with $(\mathbf{d}, k)$, and compute the ratio $x/s$ where $x$ is the number of simple edges generated. 
In all experiments performed for the paper, we use $s=1000$.

With $\Cprob{\mathrm{simple}}{\mathbf{d}, k}$ approximated for all edge sizes $k$, we can now approximate the number of simplicial pairs. We will show the algorithm for computing the expected number of $(3, 5)$-pairs here, as the generalization is straightforward to interpret but difficult to notate. Write $|\mathbf{d}| := \sum_{i \in [n]} d_i$. For an edge $e = \{v_1, \dots, v_5\}$, the probability that an edge $e'$ of size $3$ generated by {Algorithm~\ref{alg:chung lu edge}} is (a) simple and (b) satisfies $e' \subset e$ is given by
\begin{equation} \label{eq: 3 5 pair}
\sum_{1 \leq a<b<c \leq 5} \frac{ 3! \, d_{v_a} d_{v_b} d_{v_c}}{ (|\mathbf{d}|)^3 \, \Cprob{\mathrm{simple}}{\mathbf{d}, 3}} \,.
\end{equation}
To break this down, consider only the probability that $e' = \{v_1, v_2, v_3\}$. {Algorithm~\ref{alg:chung lu edge}} can generate this edge in $3!$ different orders, and the probability of generating the edge in each case is
\[
\frac{d_{v_1} d_{v_2} d_{v_3}}{|\mathbf{d}|^3} \,.
\]
It can also happen that {Algorithm~\ref{alg:chung lu edge}} generates a multi-edge, requiring us to sample again. Thus, the probability of \textit{eventually} sampling the edge $e' = \{v_1, v_2, v_3\}$ is 
\begin{align*}
\sum_{i \geq 0} \left( 1 - \Cprob{\mathrm{simple}}{\mathbf{d}, 3} \right)^i \frac{3! \, d_{v_1} d_{v_2} d_{v_3}}{|\mathbf{d}|^3} 
&=
\frac{3! \, d_{v_1} d_{v_2} d_{v_3}}{|\mathbf{d}|^3}  \sum_{i \geq 0} \left( 1 - \Cprob{\mathrm{simple}}{\mathbf{d}, 3} \right)^i \\
&=
\frac{3! \, d_{v_1} d_{v_2} d_{v_3}}{|\mathbf{d}|^3}  \left(\frac{1}{1-(1-\Cprob{\mathrm{simple}}{\mathbf{d}, 3})}\right)\\
&=
\frac{3! \, d_{v_1} d_{v_2} d_{v_3}}{(|\mathbf{d}|^3) \, \Cprob{\mathrm{simple}}{\mathbf{d}, 3}} \,.
\end{align*}
Summing over all ${5 \choose 3}$ possible $3$-edges inside $e$ gives us~$(\ref{eq: 3 5 pair})$. 

We now approximate the number of $(k, \ell)$ simplicial pairs as follows. 
\begin{enumerate}
\item Choose some sampling number $s$. Then, sample $s$ independent edges via {Algorithm~\ref{alg:chung lu edge}}with $(\mathbf{d}, \ell)$.
\item For each edge, compute the probability of generating a $(k, \ell)$ simplicial pair.
\item Compute the average and multiply this result by $m_k m_l$, where $m_k$ is the number of edges of size $k$, and similarly for $m_\ell$. 
\end{enumerate}

As mentioned previously, for all of the experiments performed in this paper, we chose $s = 1000$.

\subsection{Constructing a connected skeleton of a random graph}\label{subapp: skeleton}

We will generate a connected skeleton for our random graph via multiplicative coalescence. In short, multiplicative coalescence is a process in which particles in a space join together at a rate proportional to the product of their masses. We point the reader to \cite{YeoMULTIPLICATIVEC} for an overview on the multiplicative coalescence process. In the context of generating random graphs, multiplicative coalescence is the process where new edges joining disjoint components are chosen with probability proportional to the product of the weights of the components.

We will describe {Algorithm~\ref{alg:skeleton}} in words before presenting it as pseudo-code. Let $\mathbf{d} := (d_1, \dots, d_n)$ be a degree sequence and $\mathbf{m} := (m_{k_{\mathrm{min}}}, \dots, m_{k_{\mathrm{max}}})$ be a sequence of edge sizes. We construct the skeleton of our graph as follows. 
\begin{enumerate}
\item Initially, we have an empty edge list $E = \{\}$ and a collection of components, one for each vertex. For component $C = \{v\}$, define the weight of $C$, written $w(C)$, as $w(C) := d_v$. 
\item We generate a random edge-size list $S$ as per {Algorithm~\ref{alg:simplicial}}, i.e., a uniform permutation containing $m_k$ copies of $k$ for each edge size $k$.
\item Iteratively until the graph is connected, we do the following.
\begin{enumerate}
\item Choose a size $k$ from $S$ (iteratively). 
\item Sample $k$ components independently, each component $C$ being chosen with probability proportional to $w(C)$. If the chosen components $C_1, \dots, C_k$ are not all unique, discard them all and sample again (repeating until we have a collection of distinct components).
\item For each component $C$ chosen in the previous step, randomly sample a designated vertex for $C$; for $v \in C$, choose $v$ as the designated vertex for $C$ with probability $d_v/\sum_{u \in C} d_u$. 
\item Construct the edge $e$ consisting of all the designated vertices. Add $e$ to $E$, remove the chosen components $C_1, \dots, C_k$, and create a new component $C = \cup_{j \in [k]} C_{k}$ with $w(C) = \sum_{i \in [k]} C_i$. 
\end{enumerate} 
\end{enumerate}
If, just before the graph is fully connected, the chosen size $k$ is greater than the number of components $c$, we generate the last edge of the connected skeleton by connecting the final $c$ components as per step 3 (with $k$ replaced by $c$) and sampling the remaining $k-c$ vertices as per the usual Chung-Lu sampling technique, i.e., using {Algorithm~\ref{alg:chung lu edge}}. We note that, other than potentially the last edge constructed, an edge constructed in step~3 is equivalent to an edge generated by {Algorithm~\ref{alg:chung lu edge}} conditioned on this edge joining $k$ distinct components. We use this observation to simplify {Algorithm~\ref{alg:skeleton}}. We will simplify {Algorithm~\ref{alg:skeleton}} by writing ``update [collection of components]'' after generating an edge. 

\begin{algorithm}[ht]
\caption{Connected skeleton. }\label{alg:skeleton}
\begin{algorithmic}[1]
\Require $(d_1,\dots,d_n)$, $(m_{k_{\mathrm{min}}}, \dots, m_{k_{\mathrm{max}}})$
\State Initialize edge list $E = \{\}$, a random edge-size list $S$ as per {Algorithm~\ref{alg:simplicial}}, and a collection of components $\cC = \big\{ C_v := \{v\} \big| v \in [n] \big\}$.
\For{$k \in S$}
\If{$k \leq |\cC|$}
\Repeat
\State Sample $e \sim \mathbf{Algorithm~\ref{alg:chung lu edge}} \Big( (d_1,\dots,d_n), k \Big)$.
\Until{$\big|e \cap C \big| \leq 1$ for all $C \in \cC$}
\State Set $E = E \cup e$ and update $\cC$.
\Else
\State Set $c = |\cC|$.
\Repeat
\State Sample $e' \sim \mathbf{Algorithm~\ref{alg:chung lu edge}} \Big( (d_1,\dots,d_n), c \Big)$.
\Until{$\big|e' \cap C \big| \leq 1$ for all $C \in \cC$} 
\State Sample $e'' \sim \mathbf{Algorithm~\ref{alg:chung lu edge}} \Big( (d_1,\dots,d_n), k-c \Big)$.
\State Set $E = E \cup \{e' \cup e''\}$ and update $\cC$.
\EndIf
\If{$|\cC| = 1$}
\State Return $E$
\EndIf
\EndFor
\State Return $E$
\end{algorithmic}
\end{algorithm}

Once we generate a connected skeleton via {Algorithm~\ref{alg:skeleton}}, we then update the parameter $(m_{k_{\mathrm{min}}}, \dots, m_{k_{\mathrm{max}}})$ (by subtracting, from $m_k$, the number of edges of size $k$ that were generated for each $k$) and generate the rest of the simplicial Chung-Lu graph via {Algorithm~\ref{alg:simplicial}} with updated parameter $(m_{k_{\mathrm{min}}}, \dots, m_{k_{\mathrm{max}}})$ and initial (non-empty) edge list $E$.

\end{document}